\documentclass[a4paper,10pt]{article}
\usepackage{graphicx}
\usepackage{amsmath,amssymb,amsfonts,amsthm}
\usepackage{url, enumerate,anysize}
\usepackage{lscape}
\usepackage[subnum]{cases}

\usepackage[usenames,dvipsnames,svgnames,table]{xcolor}
\usepackage{color}
\usepackage[colorlinks=true,linkcolor=black, citecolor=blue, urlcolor=blue]{hyperref}
\usepackage{graphicx}
\usepackage{tikz}
\usepackage[titletoc]{appendix}
\newtheorem{thm}{Theorem}[section]

\newtheorem{lem}[thm]{Lemma}

\usepackage[usenames,dvipsnames,svgnames,table]{xcolor}

\newenvironment{mylemma}[1]{{ \textbf{\textit{Proof of Lemma #1:}}}}{}
\newenvironment{mythm}[1]{{ \textbf{\textit{Proof of Theorem #1:}}}}{}

\title{Some parametrized dynamic priority policies for 2-class M/G/1 queues: completeness and applications}
\author{Manu K. Gupta$^1$\footnote{Emails: Manu K. Gupta (manu-kumar.gupta@irit.fr), N. Hemachandra (nh@iitb.ac.in) and J. Venkateswaran (jayendran@iitb.ac.in)}, N. Hemachandra$^2$ and J. Venkateswaran$^2$\\ $^1$IRIT, 2 rue C. Camichel, Toulouse, France \\$^2$Industrial Engineering and Operations Research, IIT Bombay}

\marginsize{1.5cm}{1.5cm}{1.5cm}{1.5cm}
\begin{document}
\maketitle 

\begin{abstract}
Completeness of a dynamic priority scheduling scheme is of fundamental importance for the optimal control of queues in areas {as diverse as computer communications, communication networks, supply chains and manufacturing systems. Our \textit{first main contribution} is to {identify the mean waiting time completeness as a unifying aspect} for four different dynamic priority scheduling schemes by proving their completeness and equivalence in 2-class M/G/1 queue.} These dynamic priority schemes are earliest due date based, head of line priority jump, relative priority and probabilistic priority.

In our \textit{second main contribution}, we {characterize} the optimal scheduling policies for the case studies in different domains {by exploiting the completeness of above dynamic priority schemes}. {The major theme of second main contribution is resource allocation/optimal control in revenue management problems for contemporary systems such as cloud computing, high performance computing, etc., where congestion is inherent. {
Using completeness and theoretically tractable nature of relative priority policy, we study the impact of approximation in a fairly generic data network utility framework. We introduce the notion of min-max fairness in multi-class queues and show that a simple global FCFS policy is min-max fair. Next, we re-derive the celebrated $c/\rho$ rule for 2-class M/G/1 queues by an elegant argument and also simplify a complex joint pricing and scheduling problem for a wider class of scheduling policies.  
} }
\end{abstract}

\textbf{Keywords:}{ Achievable region,  $c\mu$ rule, Cloud computing,  Dynamic priority scheduling, Global FCFS, High performance computing, Min-max fairness, Optimal control, Pricing,  Utility in data networks.   }

\section{Introduction}

{

In recent years, there has been growing interest in dynamic resource allocation/optimal control in revenue management problems from various research communities such as computer science, management science, communication engineering, etc. In such service systems, congestion is inherent and our focus is on contemporary systems such as cloud computing, high performance computing, data networks, etc. These systems often consist of various types of incoming traffic, seeking differential service requirements. It is a fundamental concern for service providers to allocate suitable network resources to appropriate traffic classes so as to maximize either the resource utilization or obtain maximal revenue or provide better quality of service, etc. 


Multi-class queues offer a flexible way of modeling a variety of complex dynamic real world problems where customers arrive over time for service and service discrimination is a major criterion. Thus, the choice of queue discipline is important. Different types of priority schemes are possible to schedule the customers which are competing for service at a common resource. Absolute or strict priority to one class of customers usually results in starvation of resource for a very long time to the lower priority classes of customers. This motivates the use of dynamic priority scheduling schemes.

}

There are various types of \textit{parametrized} dynamic priority rules to overcome the starvation of lower priority customers in multi-class queues. Kleinrock proposed Delay Dependent Priority (DDP) scheme based on delay in queues (see \cite{Kleinrock1964}). Some other parametrized dynamic priority rules are Earliest Due Date (EDD) based dynamic priority (see \cite{EDDpriority}), Head Of Line Priority Jump (HOL-PJ)~(see \cite{holpj}) and probabilistic priority (see \cite{jiang2002delay}).  Relative priority, recently proposed in \cite{haviv2}, is yet an another class of parametrized dynamic priority scheme which is based on the number of customers in each class. Each dynamic priority scheme has its own applicability and limitations. {One of the central themes of this paper is to provide a unifying aspect for all these priority schemes by identifying them as \textit{complete} {{and by relating them to the completeness of extended DDP}}.} {We now discuss different types of dynamic priority schemes followed by a discussion on the significance of completeness.}

{

EDD dynamic priority scheme often finds application in project scheduling, where multiple tasks (jobs) need to be completed before their respective deadlines using shared resources. Due to parametrized (by urgency numbers or equivalently due dates) nature of EDD priority scheme, appropriate urgency numbers can be designed for each type of job in a given project management problem.} No additional processing delay is involved with HOL-PJ as compared to HOL (see \cite{holpj}). Thus, HOL-PJ is \textit{computationally most efficient} dynamic priority scheme among all dynamic priority schemes discussed above. Note that HOL-PJ will have relatively \textit{less} \textit{switching rate} due to its mechanism being similar to HOL. In probabilistic priority discipline, service is provided to each class based on polling and a pre-defined parameter associated with each class. This scheme associates a compact real valued parameter for each class and does not use the information about number or delay in queue while scheduling the customers. This can be heavily exploited in building simulators for multi-class queues and for solving optimal control problems (see Section \ref{applications} for few examples). One of the major drawbacks with this scheduling discipline is the unavailability of an exact expression for mean waiting time of each class. Relative priority queue discipline overcomes this drawback. Relative priority scheme associates a compact parameter in 2-class queue and exact expressions for mean waiting time are known (see \cite{haviv2}). Hence, relative priority scheme can be used to simplify optimal control problems. Few such examples are discussed in this paper (see Section \ref{cmu_rule} and \ref{joint_pricing}).

Optimal control of multi-class queueing systems has received significant attention due to its applications in computers, communication networks, and manufacturing systems (see \cite{bertsimas1995achievable}, \cite{bertsimas1996conservation}, \cite{hassin2009use} and references therein). One of the main tools for such control problems is to characterize the achievable region for performance measure of interest, and then use optimization methods to find the optimal control policy (see \cite{gupta20152}, \cite{2classpolling} and \cite{li2012delay}). Optimal control policy for certain nonlinear optimization problems for 2-class work conserving queueing systems is derived in \cite{hassin2009use}. A finite step algorithm for optimal pricing and admission control is proposed in \cite{sinha2010pricing} by using a complete class of parametrized (delay dependent) dynamic priority. Optimal control policy in 2-class polling (non work conserving) system for certain optimization problems using achievable region approach is recently developed  in \cite{2classpolling}. In each of these, a suitable class of parametrized dynamic priority schemes are used; to ensure optimality, such classes have to be \textit{complete} as discussed below. 

Average waiting time vectors form a nice geometric structure (polytope) driven by Kleinrock's conservation laws under certain scheduling assumptions for multi-class single server priority queues (see \cite{coffman1980characterization}, \cite{shanthikumar1992multiclass}). This kind of structure also helps if one wants to solve an optimal control problem over a certain set of scheduling policies. Researchers in this field have come up with geometrical structure of achievable region in case of multiple servers and even for some networks (See \cite{federgruen}, \cite{bertsimas}). Unbounded achievable region for mean waiting time in 2-class deterministic polling system is identified in \cite{2classpolling} and a unifying conservation law is recently proposed in \cite{ayesta2007unifying}. {Achievable region for {nonlinear} performance measures have also been explored in literature; for example variance of waiting time in single class queue by \cite{gupta20152} and waiting time tail probability in 2-class queue by \cite{gupta2015conservation}. }

A parametrized scheduling policy is called \textit{complete} by Mitrani and Hine (\cite{complete}) if it achieves all possible vectors of mean waiting times in the achievable region. This question of completeness is important in the following aspect. A complete scheduling class can be used to find the optimal control policy over the set of scheduling disciplines. Discriminatory processor sharing (DPS) class of parametrized dynamic priority is identified as \textit{complete} policy in 2-class M/G/1 queue and used to determine the optimal control policy in \cite{hassin2009use}. This idea of completeness is also useful in designing synthesis algorithms where service provider wants to design a system with certain service level (mean waiting time) for each class. Federgruen and Groenvelt (\cite{federgruen}) devised a synthesis algorithm by using the completeness of mixed dynamic priority which is based on delay dependent priority scheme proposed in \cite{Kleinrock1964}. {This paper provides a unifying presentation for different dynamic priority scheduling schemes by identifying them as complete and solves certain contemporary resource allocation problems in the context of revenue management. We also revisit some classical queuing problems (fairness and $c\mu$ rule) and demonstrate the applicability of these ideas. Thus, the contributions of this paper are two-fold: }
\begin{enumerate}
\item Four different dynamic priority scheduling schemes are identified to be complete. 
\begin{itemize}
\item[-] Explicit closed form expressions for equivalence between these scheduling schemes. 
\item[-] Completeness and equivalence provide a unifying view for different scheduling schemes. 
\end{itemize}
{
\item Applications in solving optimal control problems in different domains:
\begin{itemize}
\item[-] High performance computing, cloud computing, $c/\rho$ rule, a joint pricing and scheduling problem.  
\item[-] Optimal utility in a data network.
\item[-] Min-max fairness nature of global FCFS policy.
\end{itemize}}
\end{enumerate}
We now provide a brief summary and methodology for all the results. We first argue that extended DDP forms a complete class using some of the results in literature. Further completeness of other dynamic priority schemes (EDD, relative, HOL-PJ and probabilistic priority) is established via equivalence with extended DDP in 2-class M/G/1 queue.

{Some appropriate optimal control problems are described in the context of high performance computing facility and cloud computing. We formulate an optimization problem to find a scheduling scheme which maximizes the utility of High Performance Computing (HPC) server in the presence of price sensitive demand. Another optimization problem is formulated to maximize the revenue rate for cloud computing server while ensuing certain quality of service for each class of incoming traffic. These {optimal control problems} exploit the completeness of relative priority discipline. {Further, these completeness results are used to propose a simpler way {of obtaining the} celebrated $c/\rho$ rule for 2-class M/G/1 queues. {A complex} joint pricing and scheduling problem considered in \cite{sinha2010pricing} is simplified using these ideas and we identify that the optimal scheduling scheme obtained in \cite{sinha2010pricing} is indeed optimal for a wider class of scheduling policies.}}

We revisit the problem of obtaining optimal utility in 2-class delay sensitive data network considered in \cite{jiang2002delay}. Approximate utility is obtained for this network by using probabilistic priority scheme (see \cite{jiang2002delay}). The stationary mean waiting time expressions are difficult to derive for probabilistic priority scheme and hence the utility computed in \cite{jiang2002delay} is approximate. We first observe that the probabilistic priority scheme consider in \cite{jiang2002delay} is actually a \textit{complete} scheduling policy. We exploit the completeness of relative priority to obtain {optimal} relative priority parameter that maximizes the network utility. The maximum utility by using the approximate mean waiting time in probabilistic priority scheme is termed as approximate utility. We exploit the theoretical tractability of global FCFS scheduling discipline to compute this approximate utility for suitably chosen system parameters. In such instances, we note that optimal utility can be quite different from the approximate one.

Fairness is an important notion for a scheduler in multi-class queues (see \cite{wierman_pe}, \cite{wierman_survey}). We introduce the notion of minmax fairness in terms of minimizing the maximum dissatisfaction (mean waiting time) of each customer's class in a multi-class queue. {We argue that a simple global FCFS policy is the only solution for this  minmax fairness problem among the set of all non-preemptive, {non-anticipative} and work conserving scheduling policies by exploiting the idea completeness.
}

{
An earlier version of this work is published in \cite{valuetoolsppr} where completeness of EDD and HOL-PJ was described. Applications of these ideas in obtaining $c/\rho$ rule and minmax fairness {were briefly} discussed in \cite{valuetoolsppr}. Proof of $c/\rho$ rule was not described in \cite{valuetoolsppr}. This paper investigates the completeness of relative and probabilistic priority schemes. {Further, in this paper, we use complete classes in finding the optimal scheduling schemes for cloud computing and high performance computing server. We also discuss the applications of these ideas in  obtaining optimal utility in a data network, including the impact of approximate mean waiting time expression under probabilistic priority scheme (see \cite{jiang2002delay}) and in a joint pricing and scheduling problem (see \cite{sinha2010pricing}). 

}
  
}
\subsection{Paper organization}
This paper is organized as follows. Section \ref{sec:description} describes the idea of completeness and four different types of parametrized dynamic priority schemes. Section \ref{sec:completeness_proofs} presents the results on completeness and equivalence between them. Section \ref{applications} discusses applications of these completeness results in solving various optimal control problems. Section \ref{sec:conclusion} ends with discussion and directions for future avenues.

\section{Parametrized dynamic priority policies and their completeness}
\label{sec:description}

In this section, we briefly discuss the {notion} of completeness and different types of parametrized dynamic priority disciplines in a {multi-class single server} M/G/1 queue.

\indent Consider a single server system with $N$ different classes of customers arriving as  independent Poisson streams each with rate $\lambda_i$ and let the mean service time be $1/\mu_i$ for class $i \in \{1,\cdots,N\}$. Let $\rho_i = \lambda_i/\mu_i,~i=1,\cdots,N$ and $\rho = \rho_1 + \rho_2 +\cdots +\rho_N$. Assume that $\rho < 1$, i.e., system attains steady state. Let the service time variance for each class be finite, i.e., $\sigma_i^2< \infty,~i=1,\cdots,N$. The performance of the system is measured by vector $\mathbf{W} = (w_1, w_2, \cdots, w_N)$, where $w_i$ is the expected waiting time of class $i$ jobs in steady state. It is obvious that all performance vectors are not possible; for example $\mathbf{W =0}$ (see \cite{complete}). We restrict our attention to scheduling disciplines which satisfy following conditions:
\begin{enumerate}
\item Service discipline is non-preemptive.
\item Server is not idle when there are jobs in the system (work conserving). 
\item Information about remaining processing time does not affect the system in any way ({non-anticipative}).
\end{enumerate}
Kleinrock's conservation law holds under above scheduling assumptions (see \cite{Kleinrock1965}):
\begin{equation}\label{ConLaw}
\sum_{i=1}^N\rho_i w_i = \dfrac{\rho W_0}{1-\rho}
\end{equation} 
where $W_0 = \sum\limits_{i=1}^N\dfrac{\lambda_i}{2}\left(\sigma_i^2 + \dfrac{1}{\mu_i^2}\right)$. This equation defines a $(N-1)$ dimensional \textit{hyperplane} in $N$ dimensional space of $\mathbf{W}$. 
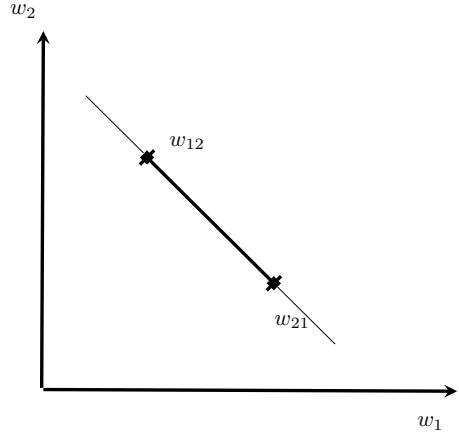
\begin{figure}[htb!]
\centering
\resizebox{0.35 \textwidth}{!}{
\ifx\du\undefined
  \newlength{\du}
\fi
\setlength{\du}{15\unitlength}
\begin{tikzpicture}
\pgftransformxscale{1.000000}
\pgftransformyscale{-1.000000}
\definecolor{dialinecolor}{rgb}{0.000000, 0.000000, 0.000000}
\pgfsetstrokecolor{dialinecolor}
\definecolor{dialinecolor}{rgb}{1.000000, 1.000000, 1.000000}
\pgfsetfillcolor{dialinecolor}
\pgfsetlinewidth{0.100000\du}
\pgfsetdash{}{0pt}
\pgfsetdash{}{0pt}
\pgfsetbuttcap
{
\definecolor{dialinecolor}{rgb}{0.000000, 0.000000, 0.000000}
\pgfsetfillcolor{dialinecolor}
\pgfsetarrowsend{stealth}
\definecolor{dialinecolor}{rgb}{0.000000, 0.000000, 0.000000}
\pgfsetstrokecolor{dialinecolor}
\draw (10.050000\du,15.000000\du)--(22.250000\du,15.050000\du);
}
\pgfsetlinewidth{0.100000\du}
\pgfsetdash{}{0pt}
\pgfsetdash{}{0pt}
\pgfsetbuttcap
{
\definecolor{dialinecolor}{rgb}{0.000000, 0.000000, 0.000000}
\pgfsetfillcolor{dialinecolor}
\pgfsetarrowsend{stealth}
\definecolor{dialinecolor}{rgb}{0.000000, 0.000000, 0.000000}
\pgfsetstrokecolor{dialinecolor}
\draw (10.000000\du,14.950000\du)--(10.050000\du,4.350000\du);
}
\pgfsetlinewidth{0.100000\du}
\pgfsetdash{}{0pt}
\pgfsetdash{}{0pt}
\pgfsetbuttcap
{
\definecolor{dialinecolor}{rgb}{0.000000, 0.000000, 0.000000}
\pgfsetfillcolor{dialinecolor}
}
\definecolor{dialinecolor}{rgb}{0.000000, 0.000000, 0.000000}
\pgfsetstrokecolor{dialinecolor}
\draw (13.176777\du,8.176777\du)--(16.773223\du,11.773223\du);
\pgfsetlinewidth{0.100000\du}
\pgfsetdash{}{0pt}
\pgfsetmiterjoin
\pgfsetbuttcap
\definecolor{dialinecolor}{rgb}{0.000000, 0.000000, 0.000000}
\pgfsetfillcolor{dialinecolor}
\fill (12.893934\du,8.106066\du)--(13.106066\du,7.893934\du)--(13.318198\du,8.106066\du)--(13.106066\du,8.318198\du)--cycle;
\definecolor{dialinecolor}{rgb}{0.000000, 0.000000, 0.000000}
\pgfsetstrokecolor{dialinecolor}
\draw (12.893934\du,8.318198\du)--(13.318198\du,7.893934\du);
\pgfsetlinewidth{0.100000\du}
\pgfsetdash{}{0pt}
\pgfsetmiterjoin
\pgfsetbuttcap
\definecolor{dialinecolor}{rgb}{0.000000, 0.000000, 0.000000}
\pgfsetfillcolor{dialinecolor}
\fill (17.056066\du,11.843934\du)--(16.843934\du,12.056066\du)--(16.631802\du,11.843934\du)--(16.843934\du,11.631802\du)--cycle;
\definecolor{dialinecolor}{rgb}{0.000000, 0.000000, 0.000000}
\pgfsetstrokecolor{dialinecolor}
\draw (17.056066\du,11.631802\du)--(16.631802\du,12.056066\du);
\pgfsetlinewidth{0.010000\du}
\pgfsetdash{}{0pt}
\pgfsetdash{}{0pt}
\pgfsetbuttcap
{
\definecolor{dialinecolor}{rgb}{0.000000, 0.000000, 0.000000}
\pgfsetfillcolor{dialinecolor}
\definecolor{dialinecolor}{rgb}{0.000000, 0.000000, 0.000000}
\pgfsetstrokecolor{dialinecolor}
\draw (16.950000\du,11.950000\du)--(18.650000\du,13.650000\du);
}
\pgfsetlinewidth{0.010000\du}
\pgfsetdash{}{0pt}
\pgfsetdash{}{0pt}
\pgfsetbuttcap
{
\definecolor{dialinecolor}{rgb}{0.000000, 0.000000, 0.000000}
\pgfsetfillcolor{dialinecolor}
\definecolor{dialinecolor}{rgb}{0.000000, 0.000000, 0.000000}
\pgfsetstrokecolor{dialinecolor}
\draw (11.302071\du,6.272071\du)--(13.002071\du,7.972071\du);
}
\definecolor{dialinecolor}{rgb}{0.000000, 0.000000, 0.000000}
\pgfsetstrokecolor{dialinecolor}
\node[anchor=west] at (20.850000\du,16.000000\du){$w_1$};
\definecolor{dialinecolor}{rgb}{0.000000, 0.000000, 0.000000}
\pgfsetstrokecolor{dialinecolor}
\node[anchor=west] at (8.845000\du,3.760000\du){$w_2$};
\definecolor{dialinecolor}{rgb}{0.000000, 0.000000, 0.000000}
\pgfsetstrokecolor{dialinecolor}
\node[anchor=west] at (13.545000\du,7.660000\du){$w_{12}$};
\definecolor{dialinecolor}{rgb}{0.000000, 0.000000, 0.000000}
\pgfsetstrokecolor{dialinecolor}
\node[anchor=west] at (16.640000\du,12.975000\du){$w_{21}$};
\end{tikzpicture}}
\caption{Achievable performance vectors in a 2-class M/G/1 queue \cite{mitranibook}}
\label{2classline}
\end{figure}

In case of two classes, all achievable performance vectors $\mathbf{W} = (w_1,w_2)$ form the points lying on a \textit{straight line segment} defined by Kleinrock's conservation law as shown in Figure \ref{2classline}. There are two special points on this line segment $\mathbf{w_{12}}$ and $\mathbf{w_{21}}$. These two points correspond to the mean waiting time vector when class 1 and class 2 are given strict priority, respectively. The priority policy (1,2) yields the lowest possible average waiting time for type 1 and the highest possible one for type 2; the situation is reversed with the policy (2,1). Thus, no point to the left of (1,2) or to the right of (2,1) can be achieved. Clearly, every point in the line segment is a convex combination of the extreme points $\mathbf{w_{12}}$ and $\mathbf{w_{21}}$.

All achievable performance vectors lie in ($N-1$) dimensional \textit{hyperplane} defined by above conservation law for $N$ classes of customers. There are $(N)!$ extreme points, corresponding to $(N)!$  non-preemptive strict priority policies. Hence, the set of achievable performance vectors form a \textit{polytope} with these \textit{vertices}. Refer Figure \ref{3classline} for polytope corresponding to three classes of customers. Note that it has $(3)! = 6$ vertices.  

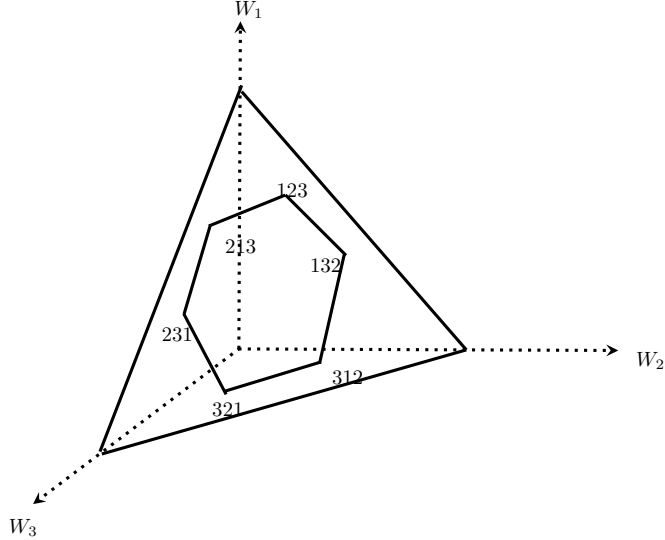
\begin{figure}[htb!]
\centering
\resizebox{0.5 \textwidth}{!}{
\ifx\du\undefined
  \newlength{\du}
\fi
\setlength{\du}{15\unitlength}
\begin{tikzpicture}
\pgftransformxscale{1.000000}
\pgftransformyscale{-1.000000}
\definecolor{dialinecolor}{rgb}{0.000000, 0.000000, 0.000000}
\pgfsetstrokecolor{dialinecolor}
\definecolor{dialinecolor}{rgb}{1.000000, 1.000000, 1.000000}
\pgfsetfillcolor{dialinecolor}
\pgfsetlinewidth{0.100000\du}
\pgfsetdash{{\pgflinewidth}{0.200000\du}}{0cm}
\pgfsetdash{{\pgflinewidth}{0.200000\du}}{0cm}
\pgfsetbuttcap
{
\definecolor{dialinecolor}{rgb}{0.000000, 0.000000, 0.000000}
\pgfsetfillcolor{dialinecolor}
\pgfsetarrowsend{stealth}
\definecolor{dialinecolor}{rgb}{0.000000, 0.000000, 0.000000}
\pgfsetstrokecolor{dialinecolor}
\draw (13.645204\du,16.479308\du)--(25.845204\du,16.529308\du);
}
\pgfsetlinewidth{0.100000\du}
\pgfsetdash{{\pgflinewidth}{0.200000\du}}{0cm}
\pgfsetdash{{\pgflinewidth}{0.200000\du}}{0cm}
\pgfsetbuttcap
{
\definecolor{dialinecolor}{rgb}{0.000000, 0.000000, 0.000000}
\pgfsetfillcolor{dialinecolor}
\pgfsetarrowsend{stealth}
\definecolor{dialinecolor}{rgb}{0.000000, 0.000000, 0.000000}
\pgfsetstrokecolor{dialinecolor}
\draw (13.659685\du,16.476802\du)--(13.709685\du,5.876802\du);
}
\pgfsetlinewidth{0.100000\du}
\pgfsetdash{{\pgflinewidth}{0.200000\du}}{0cm}
\pgfsetdash{{\pgflinewidth}{0.200000\du}}{0cm}
\pgfsetbuttcap
{
\definecolor{dialinecolor}{rgb}{0.000000, 0.000000, 0.000000}
\pgfsetfillcolor{dialinecolor}
\pgfsetarrowsend{stealth}
\definecolor{dialinecolor}{rgb}{0.000000, 0.000000, 0.000000}
\pgfsetstrokecolor{dialinecolor}
\draw (13.700000\du,16.450000\du)--(7.050000\du,21.500000\du);
}
\pgfsetlinewidth{0.100000\du}
\pgfsetdash{}{0pt}
\pgfsetdash{}{0pt}
\pgfsetbuttcap
{
\definecolor{dialinecolor}{rgb}{0.000000, 0.000000, 0.000000}
\pgfsetfillcolor{dialinecolor}
\definecolor{dialinecolor}{rgb}{0.000000, 0.000000, 0.000000}
\pgfsetstrokecolor{dialinecolor}
\draw (13.750000\du,7.950000\du)--(9.200000\du,19.800000\du);
}
\pgfsetlinewidth{0.100000\du}
\pgfsetdash{}{0pt}
\pgfsetdash{}{0pt}
\pgfsetbuttcap
{
\definecolor{dialinecolor}{rgb}{0.000000, 0.000000, 0.000000}
\pgfsetfillcolor{dialinecolor}
\definecolor{dialinecolor}{rgb}{0.000000, 0.000000, 0.000000}
\pgfsetstrokecolor{dialinecolor}
\draw (20.965204\du,16.509308\du)--(9.259305\du,19.879305\du);
}
\pgfsetlinewidth{0.100000\du}
\pgfsetdash{}{0pt}
\pgfsetdash{}{0pt}
\pgfsetbuttcap
{
\definecolor{dialinecolor}{rgb}{0.000000, 0.000000, 0.000000}
\pgfsetfillcolor{dialinecolor}
\definecolor{dialinecolor}{rgb}{0.000000, 0.000000, 0.000000}
\pgfsetstrokecolor{dialinecolor}
\draw (13.750000\du,8.150000\du)--(20.965204\du,16.509308\du);
}
\pgfsetlinewidth{0.100000\du}
\pgfsetdash{}{0pt}
\pgfsetdash{}{0pt}
\pgfsetbuttcap
{
\definecolor{dialinecolor}{rgb}{0.000000, 0.000000, 0.000000}
\pgfsetfillcolor{dialinecolor}
\definecolor{dialinecolor}{rgb}{0.000000, 0.000000, 0.000000}
\pgfsetstrokecolor{dialinecolor}
\draw (12.700000\du,12.500000\du)--(15.150000\du,11.500000\du);
}
\pgfsetlinewidth{0.100000\du}
\pgfsetdash{}{0pt}
\pgfsetdash{}{0pt}
\pgfsetbuttcap
{
\definecolor{dialinecolor}{rgb}{0.000000, 0.000000, 0.000000}
\pgfsetfillcolor{dialinecolor}
\definecolor{dialinecolor}{rgb}{0.000000, 0.000000, 0.000000}
\pgfsetstrokecolor{dialinecolor}
\draw (13.200000\du,17.850000\du)--(16.300000\du,16.900000\du);
}
\pgfsetlinewidth{0.100000\du}
\pgfsetdash{}{0pt}
\pgfsetdash{}{0pt}
\pgfsetbuttcap
{
\definecolor{dialinecolor}{rgb}{0.000000, 0.000000, 0.000000}
\pgfsetfillcolor{dialinecolor}
\definecolor{dialinecolor}{rgb}{0.000000, 0.000000, 0.000000}
\pgfsetstrokecolor{dialinecolor}
\draw (15.155187\du,11.495187\du)--(17.100000\du,13.450000\du);
}
\pgfsetlinewidth{0.100000\du}
\pgfsetdash{}{0pt}
\pgfsetdash{}{0pt}
\pgfsetbuttcap
{
\definecolor{dialinecolor}{rgb}{0.000000, 0.000000, 0.000000}
\pgfsetfillcolor{dialinecolor}
\definecolor{dialinecolor}{rgb}{0.000000, 0.000000, 0.000000}
\pgfsetstrokecolor{dialinecolor}
\draw (16.250000\du,16.950000\du)--(17.050000\du,13.400000\du);
}
\pgfsetlinewidth{0.100000\du}
\pgfsetdash{}{0pt}
\pgfsetdash{}{0pt}
\pgfsetbuttcap
{
\definecolor{dialinecolor}{rgb}{0.000000, 0.000000, 0.000000}
\pgfsetfillcolor{dialinecolor}
\definecolor{dialinecolor}{rgb}{0.000000, 0.000000, 0.000000}
\pgfsetstrokecolor{dialinecolor}
\draw (11.900000\du,15.350000\du)--(12.750000\du,12.450000\du);
}
\pgfsetlinewidth{0.100000\du}
\pgfsetdash{}{0pt}
\pgfsetdash{}{0pt}
\pgfsetbuttcap
{
\definecolor{dialinecolor}{rgb}{0.000000, 0.000000, 0.000000}
\pgfsetfillcolor{dialinecolor}
\definecolor{dialinecolor}{rgb}{0.000000, 0.000000, 0.000000}
\pgfsetstrokecolor{dialinecolor}
\draw (13.250000\du,17.950000\du)--(11.890187\du,15.340187\du);
}
\definecolor{dialinecolor}{rgb}{0.000000, 0.000000, 0.000000}
\pgfsetstrokecolor{dialinecolor}
\node[anchor=west] at (26.150000\du,16.850000\du){$W_2$};
\definecolor{dialinecolor}{rgb}{0.000000, 0.000000, 0.000000}
\pgfsetstrokecolor{dialinecolor}
\node[anchor=west] at (13.250000\du,5.500000\du){$W_1$};
\definecolor{dialinecolor}{rgb}{0.000000, 0.000000, 0.000000}
\pgfsetstrokecolor{dialinecolor}
\node[anchor=west] at (6.000000\du,22.250000\du){$W_3$};
\definecolor{dialinecolor}{rgb}{0.000000, 0.000000, 0.000000}
\pgfsetstrokecolor{dialinecolor}
\node[anchor=west] at (12.535000\du,18.440000\du){321};
\definecolor{dialinecolor}{rgb}{0.000000, 0.000000, 0.000000}
\pgfsetstrokecolor{dialinecolor}
\node[anchor=west] at (16.380000\du,17.405000\du){312};
\definecolor{dialinecolor}{rgb}{0.000000, 0.000000, 0.000000}
\pgfsetstrokecolor{dialinecolor}
\node[anchor=west] at (15.675000\du,13.770000\du){132};
\definecolor{dialinecolor}{rgb}{0.000000, 0.000000, 0.000000}
\pgfsetstrokecolor{dialinecolor}
\node[anchor=west] at (14.600000\du,11.350000\du){123};
\definecolor{dialinecolor}{rgb}{0.000000, 0.000000, 0.000000}
\pgfsetstrokecolor{dialinecolor}
\node[anchor=west] at (12.950000\du,13.150000\du){213};
\definecolor{dialinecolor}{rgb}{0.000000, 0.000000, 0.000000}
\pgfsetstrokecolor{dialinecolor}
\node[anchor=west] at (10.910000\du,16.015000\du){231};
\end{tikzpicture}}
\caption{Achievable performance vectors in a three class M/G/1 queue \cite{mitranibook}}
\label{3classline}
\end{figure}

\indent If the value of performance vector is $W$ for a given scheduling strategy $S$, we say that $S$ achieves $W$. A family of scheduling strategy is called \textbf{\textit{complete}} if it achieves the polytope described above (see \cite{complete}). The set of all scheduling strategies is trivially a complete family; thus one is interested in a subset of all strategies, parametrized suitably, but complete. In this article, we identify four family of parametrized scheduling strategies which are complete for 2-class M/G/1 queue. We now describe different types of parametrized dynamic priority schemes from literature. Completeness and equivalence of these dynamic priority schemes are discussed in Section \ref{sec:completeness_proofs}. 
\subsection{Delay dependent priority (DDP) policy} 
Delay dependent priority scheme was first introduced by Kleinrock \cite{Kleinrock1964}. The logic of this discipline is as follows. Each customer class is assigned a queue discipline management parameter, $b_i$, $i \in \{1, \cdots, N\}$, $0 \le b_1 \le b_2 \le \cdots \le b_N$. Higher the value of $b_i$, higher is the rate of gaining priority for class $i$ as discussed below. The instantaneous dynamic priority for a customer of class $i$ at time $t$, $q_i(t)$, is given by: 
\begin{equation}
q_i(t) = (t-\tau)\times b_i,~ i = 1,2, \cdots, N,
\label{ddpinst}
\end{equation}
where $\tau$ is the arrival time of the customer. After the current customer is served, the server will pick the customer with the highest instantaneous dynamic priority parameter $q_i(t)$ for service. Ties are broken using First-Come-First-Served rule.

Mean waiting time for $k^{th}$ class under this discipline, $E(W_k^{DDP})$ is given by following recursion \cite{Kleinrock1964}:
\begin{equation}\label{eqn:DDP_recursion}
E(W_k^{DDP}) = \dfrac{\dfrac{W_0}{1-\rho} - \displaystyle\sum_{i=1}^{k-1} \rho_i E(W_i^{DDP})\left(1-\dfrac{b_i}{b_k}\right)}{1-\displaystyle\sum_{i=k+1}^{N}\rho_i\left(1-\dfrac{b_k}{b_i}\right)}
\end{equation}
where $\rho_i = \frac{\lambda_i}{\mu_i}$, $\rho = \sum\limits_{i=1}^N\rho_i$ and $W_0 = \sum\limits_{i=1}^N\frac{\lambda_i}{2}\left(\sigma_i^2 + \frac{1}{\mu_i^2} \right)$ and $0 < \rho <  1$.

Federgruen and Groenevelt \cite{federgruen} proposed a synthesis algorithm by exploiting the completeness of mixed dynamic priority which is based on delay dependent priority. In case of two classes, mean waiting time expressions get simplified for delay dependent priority scheme. {Extended delay dependent priority for 2-class queues is described in Appendix \ref{proof:DDP_cmplt} {which turns out to be \textit{complete}}.}


\subsection{Earliest due date (EDD) dynamic priority policy}  
This parametrized dynamic priority scheme was first proposed in \cite{EDDpriority}. Consider the system setting similar to delay dependent priority scheme with $N$ classes and single server queueing system. Each class $i \in \{1, \cdots, N\}$ has a constant urgency number $u_i$ (weights) associated with it. Without loss of generality, classes are numbered such that $u_1 \leq u_2 \leq \dots \leq u_N$. When a customer from class $i$ arrives at the system at time $t_i$, customer is assigned a real number $t_i + u_i$. The server chooses the next customer to go into service, from those present in queue, as the one with minimum value of $\{t_i + u_i\}$. 
Let $W_{k}^{EDD}$ denote the waiting time of a class $k$ jobs in non pre-emptive priority under this discipline. In steady state, $E(W_{k}^{EDD})$ is given by \cite{EDDpriority}:
  \begin{eqnarray}\label{eqn:EDD_recursion}
  E(W_{k}^{EDD}) = E(W) + \sum_{i=1}^{k-1}\rho_i\int_{0}^{u_k-u_i}P(W_{k}^{EDD} > t)dt 
  - \sum_{i=k+1}^{N}\rho_i\int_{0}^{u_i-u_k}P(W_{i}^{EDD} > t)dt 
  \end{eqnarray}
for $k = 1, \cdots, N$. Here $E(W) = \frac{W_0}{(1-\rho)}$ and $\rho_i$ is the traffic due to class $i \in \{1,\cdots,N\}$.

The formulation of the scheduling discipline in terms of urgency numbers facilitates various interpretations of the model. One primary interpretation of urgency numbers $u_i$'s correspond to the interval until the due date is reached. This model leads to a unified theory of scheduling with earliest due dates, which is an area of great practical importance (see \cite{EDDpriority}).

\subsection{Relative priority policy}

This is another type of dynamic priority scheme proposed in \cite{haviv2}. In this multi-class priority system, a \textit{positive} parameter $p_i$ is associated with each class $i \in \{1,\cdots,N\}$. If there are $n_j$ jobs of class $j$ on service completion, the next job to commence service is from class $i$ with following probability:
\begin{equation}\nonumber
\dfrac{n_i p_i}{\sum\limits_{j=1}^N n_j p_j}, ~~~1 \leq i \leq N
\end{equation}
Mean waiting time for class $k$ customers under this scheduling scheme, $E(W_k^{RP})$, is given by following recursion \cite{haviv2}:
\begin{equation}\label{eqn:RP_recursion}
E(W_k^{RP}) = W_0 + \sum_{j=1}^NE(W_j^{RP}) \rho_j\dfrac{p_j}{p_k + p_j} + \tau_k E(W_k^{RP}),~~~1\leq k \leq N.
\end{equation}  
  where $\tau_k = \displaystyle \sum_{j = 1}^N\rho_j\dfrac{p_j}{p_k + p_j},~~1 \leq k \leq N $.

\subsection{Head of line priority jump (HOL-PJ) policy}

This is another type of parametrized dynamic priority policy proposed in \cite{holpj}. The fundamental principle of HOL-PJ is to give priority to the customers having the largest queueing delay in excess of its delay requirement. In HOL-PJ, an explicit priority is assigned to each class; the more stringent the delay requirement of the class, the higher the priority. From the server's point of view, HOL-PJ is the same as head of line (HOL) strict priority queue. Unlike HOL, the priorities of customers increase as their queueing delay increases relative to their delay requirements. This is performed by customer \textit{priority jumping} (PJ) mechanism (see Figure \ref{PriortyJump}).
\begin{figure}[htb!]
\centering
\includegraphics[scale=0.4]{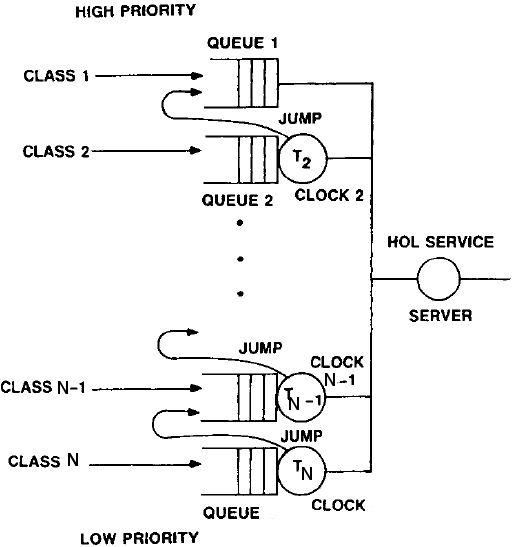}
\caption{Head-of-line with priority jump \cite{holpj}}
\label{PriortyJump}
\end{figure}

Consider a single server serving $N$ classes of customers. Let $D_k,~k=1,2,\ldots,N$ be the delay requirement for class $k$ customers where $0<D_1<D_2<\cdots<D_N\leq \infty$. Class 1 has the most stringent delay requirement and class $N$ the least; class 1 has the highest priority and class $N$ the least. $T_k,~k=2,3,\cdots, N$ is set to $D_k - D_{k-1}$. If a customer is still in queue after a period of time $T_k$, it jumps to the tail of queue $k-1$. Figure \ref{PriortyJump} illustrates the operation of HOL-PJ. Excessive delay of a customer is defined as its queueing delay in excess of its original delay requirement. It is concluded in \cite{holpj} that all the customers are queued according to largeness of their excessive delay. Mean waiting time for class $k$ customers in HOL-PJ, $E(W_k^{HOL-PJ})$, queueing discipline is derived in \cite{holpj}:
\begin{equation}
E(W_k^{HOL-PJ}) = E(W) - \sum_{j =k+1}^N \rho_j \int_0^{\sum_{l = k+1}^jT_l}P(W_j^{HOL-PJ} > t)dt + \sum_{j=1}^{k-1}\rho_j\int_{0}^{\sum_{l = j+1}^k T_l}P(W_k^{HOL-PJ} > t)dt
\end{equation}
Since $T_k = D_k - D_{k-1}$, this gives 
\begin{equation}\label{eqn:holpj_recursion}
E(W_k^{HOL-PJ}) = E(W)  - \sum_{j =k+1}^N \rho_j \int_0^{D_j-D_k}P(W_j^{HOL-PJ} > t)dt + \sum_{j=1}^{k-1}\rho_j\int_{0}^{D_k - D_j}P(W_k^{HOL-PJ} > t)dt.
\end{equation}
Here $E(W)$ is $\frac{W_0}{(1-\rho)}$. Note that above recursion is again not a closed form equation; however, these expressions are useful in deriving mean completeness and equivalence results in Section \ref{sec:completeness_proofs}. 

We briefly describe the practical significance of this model as pointed out in \cite{holpj}. This model can be used in an integrated packet switching node serving multiple classes of delay sensitive traffic (eg., voice and video traffic). Implementation of this discipline is relatively simple and the processing overhead is minimal from server's perspective as its mechanism is similar to head of line strict priority.

\subsection{Probabilistic priority (PP) policy}\label{PP}
This is yet another type of dynamic priority scheme first proposed in \cite{jiang2002delay}. This policy works as follows. Let there be $N$ classes of customers where customers with a smaller class number have higher priority than with the larger class number. PP discipline is non pre-emptive. Each class of customers has its own queue and buffer capacity of the queue is infinite. Customers in the same queue are served in FCFS fashion. Queue $i$ is assigned a parameter $0 \le p_i\le 1$, $i=1,2,\cdots,N.$

At each service completion, the server first polls queue 1 and then polling continues for subsequent queues. If queue $i~(< N)$ and all other queues $j(\neq i)$ are non empty when queue $i$ is polled, the customer at the head of queue $i$ will be served with probability $p_i$. The server polls the next queue $i+1$ with probability $1-p_i$. If some queues are empty when queue $i$ is polled, the head customer of queue $i$ will be served with probability $\hat{p}_i$, the server polls the next non empty busy queue (BQ) with probability $1-\hat{p}_i$. Here $\hat{p}_i$ ($i \in$ BQ) is determined such that the wasted server share of these empty queues is allocated to those non empty queues based on their assigned parameters. Such scheduling discipline is analysed in \cite{jiang2002delay} with a restriction to two class case, thus for two class queue $\hat{p}_1 = p_1 \text{ or } 1$.

If queue $i$ is empty at the time being polled, it will not be served and server polls the next queue $i+1$ with probability 1. If queue $i$ is non empty and at the time being polled but all next queues $j(>i)$ are empty, it will be served with probability 1 instead of $p_i$. This process then repeats at queue $i+1$ which has parameter $p_{i+1}$. In addition, $p_N$ is always set to be one as queue $N$ is the last queue that may be served in a service cycle. Server starts polling queue 1 after each service completion. 

The service cycle refers to the cycle that the server polls queues, serves a customer and restarts polling from queue 1. In each service cycle, one and only one customer is served if the system is not idle. PP discipline is work conserving. Jiang et. al. \cite{jiang2002delay} derived approximate mean waiting time for probabilistic priority scheduling in 2-class queue. Jiang et. al. \cite{jiang2002delay} also derived certain other properties of mean waiting time which are useful in establishing the completeness of this dynamic priority scheme in Section \ref{sec:completeness_proofs}.

Now, we present completeness and equivalence of different dynamic priority schemes for 2-class queue in subsequent section.


\section{Equivalence and completeness of different parametrized policies}
\label{sec:completeness_proofs}

In this section, we prove the completeness of EDD, relative, HOL-PJ and PP parametrized dynamic priority schemes for 2-class M/G/1 queue. {DDP spans the interior of achievable region from \cite{federgruen}. Thus, in 2-class queue, DDP spans the entire achievable region except the two end points. The two end points are achievable by an extended DDP discussed in Appendix \ref{proof:DDP_cmplt}. This implies that extended DDP is complete for two classes. We obtain an explicit one-to-one {nonlinear} transformation from extended DDP class to EDD and to RP. Hence, completeness of EDD and relative priority follows via this equivalence. Completeness of HOL-PJ is argued by identifying the similarity in recursion of mean waiting time for EDD and HOL-PJ. Probabilistic priority scheme is identified as a complete class by exploiting certain properties of mean waiting time. An independent proof of completeness of different dynamic priority schemes is also presented without using the completeness of extended DDP.}

\subsection{EDD based dynamic priority policy}
In case of two classes, the expected waiting time is \cite[Theorem 2]{EDDpriority}:
\begin{eqnarray}\label{eqn:2clsedd1}
E(W_{h}^{EDD}) = E(W) - \rho_l \int_0^u P(T_h[W] > y)dy\\
E(W_l^{EDD}) = E(W) + \rho_h \int_0^u P(T_h[W] > y)dy\label{eqn:2clsedd2}
\end{eqnarray}
Here index $l$ and $h$ are for lower and higher priority class. $u_l$ and $u_h$ are the weights associated with lower and higher classes respectively, $u = u_l - u_h \geq 0$ as $u_l \ge u_h \ge 0$. Let $W(t)$ be the total uncompleted service time of all customers present in the system at time $t$, regardless of class. $W(t) \rightarrow W$ as $t \rightarrow \infty $.
$$T_h[W(t)] = \inf\{t^{'} \geq  0 ;~ \hat{W}_h(t+t^{'}: W(t)) = 0\}$$
where $\hat{W}_h(t+t^{'}: W(t))$ is the workload of the server at time $t+t^{'}$ given an initial workload of $W(t)$ at time $t$ and considering the input workload from class $h$ only after time $t$.\\
\indent Consider the more general setting (in the view of completeness) with this type of priority where $u_1,~ u_2 \geq 0$ be the weights associated with class 1 and class 2. Let $\bar{u} = u_1 - u_2$. Thus $\bar{u}$ can take any value in the extended real line $[ -\infty, \infty]$. Class 1 will have higher or lower priority depending on $\bar{u}$ being negative or positive. By using equations (\ref{eqn:2clsedd1}) and (\ref{eqn:2clsedd2}), mean waiting time for this general setting in case of two classes can be written as:
 \begin{eqnarray}\label{eqn:EDDcombined1}
 E(W_1^{EDD}) &=& E(W) + \rho_2\left[\int_0^{\bar{u}}P(T_2(W) > y)dy ~\mathbf{1}_{\{\bar{u} \geq 0\}}\right.
 \left.-\int_0^{-\bar{u}}P(T_1(W) > y)dy ~\mathbf{1}_{\{\bar{u} < 0\}} \right]\\\label{eqn:EDDcombined2}
  E(W_2^{EDD}) &=& E(W) + \rho_1\left[\int_0^{\bar{u}}P(T_2(W) > y)dy ~\mathbf{1}_{\{\bar{u} \geq 0\}}\right.
 \left. -\int_0^{-\bar{u}}P(T_1(W) > y)dy ~\mathbf{1}_{\{\bar{u} < 0\}} \right]
 \end{eqnarray}
 
  Note that $\bar{u} = -\infty $ and $\bar{u} = \infty$ result in the corresponding mean waiting times when strict higher priority is given to class 1 and class 2 respectively. Hence, we suspect a one-to-one transformation from DDP to EDD priority policy:
\begin{lem}\label{clm:equivalenceDDPnEDD}
\textit{Delay dependent priority policy and earliest due date priority policy are equivalent in 2-class queues and their priority parameters ($\beta$ and $\bar{u}$) are related as:}{
\begin{eqnarray}\nonumber
\beta =  \frac{W_0 - (1-\rho_1)(1-\rho)\tilde{I}(\bar{u})}{W_0 + \rho_1(1-\rho)\tilde{I}(\bar{u})}\times \mathbf{1}_{\{-\infty \le  \bar{u} < 0\}}
 + ~~ \frac{W_0 + \rho_2(1-\rho)I(\bar{u})}{W_0 - (1-\rho_2)(1-\rho)I(\bar{u})}\mathbf{1}_{\{0 \le  \bar{u} \le \infty\}}
\end{eqnarray}}
where integrals $\tilde{I}(\bar{u})=\int_0^{-\bar{u}}P(T_1(W)>y)dy$ and $I(\bar{u})=\int_0^{\bar{u}}P(T_2(W)>y)dy$. 
\end{lem}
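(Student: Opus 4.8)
The plan is to exploit the fact that both policies live on the \emph{same} one-dimensional object, so that the equivalence reduces to matching a single scalar. Both delay dependent priority and earliest due date priority are non-preemptive, work conserving and non-anticipative, so Kleinrock's conservation law \eqref{ConLaw} holds for each with the identical right-hand side $\rho W_0/(1-\rho)$. Hence, for every choice of $\beta$ and every choice of $\bar u$, the vectors $(E(W_1^{DDP}),E(W_2^{DDP}))$ and $(E(W_1^{EDD}),E(W_2^{EDD}))$ both lie on the line segment of Figure~\ref{2classline}. Two points on this segment coincide precisely when they agree in a single coordinate. Therefore the whole lemma reduces to (i) obtaining a closed form for one coordinate under each policy as a function of its parameter, and (ii) solving the resulting scalar equation for $\beta$ in terms of $\bar u$; by conservation, matching either $w_1$ or $w_2$ yields the same $\beta$, so I am free to match whichever coordinate is algebraically convenient in each regime.

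First I would specialise the recursion \eqref{eqn:DDP_recursion} to $N=2$. Writing $\beta=b_2/b_1$ and $E(W)=W_0/(1-\rho)$, the regime in which class~$2$ is favoured ($\beta\ge 1$) gives the single-denominator form
\begin{equation}\nonumber
E(W_1^{DDP})=\frac{E(W)}{1-\rho_2\left(1-\tfrac{1}{\beta}\right)},
\end{equation}
while relabelling the two classes in the regime where class~$1$ is favoured ($\beta\le 1$) gives
\begin{equation}\nonumber
E(W_2^{DDP})=\frac{E(W)}{1-\rho_1(1-\beta)}.
\end{equation}
At $\beta=1$ both collapse to $E(W)$ (FCFS), and the limits $\beta\to 0,\infty$ reproduce the two strict-priority vertices supplied by the extended DDP of Appendix~\ref{proof:DDP_cmplt}.

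Next I would read off the EDD coordinates from \eqref{eqn:EDDcombined1}--\eqref{eqn:EDDcombined2}. For $\bar u\ge 0$ (class~$2$ favoured) this gives $E(W_1^{EDD})=E(W)+\rho_2 I(\bar u)$; equating it with the $\beta\ge1$ DDP form, clearing the lone denominator and solving the equation (which is linear in $1/\beta$) produces the second branch $\beta=\bigl(W_0+\rho_2(1-\rho)I\bigr)/\bigl(W_0-(1-\rho_2)(1-\rho)I\bigr)$. For $\bar u<0$ (class~$1$ favoured) I would instead match the class-$2$ coordinates, $E(W_2^{DDP})=E(W)+\rho_1\tilde I(\bar u)$, and solving (now linear in $\beta$) delivers the first branch $\beta=\bigl(W_0-(1-\rho_1)(1-\rho)\tilde I\bigr)/\bigl(W_0+\rho_1(1-\rho)\tilde I\bigr)$. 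Continuity at $\bar u=0$, where $I=\tilde I=0$ and both branches give $\beta=1$, glues the two pieces, and monotonicity of $I(\bar u)$ and $\tilde I(\bar u)$ makes $\bar u\mapsto\beta$ a strictly monotone bijection from the extended real line onto the admissible $\beta$-range, which is exactly the claimed one-to-one correspondence.

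The conceptual reduction (conservation law collapsing the problem to one coordinate) is immediate; the real work, and the place where care is needed, is the DDP side: getting the two-regime closed form correct, handling the endpoint/strict-priority limits that require the extended DDP, and carrying out the sign bookkeeping in \eqref{eqn:EDDcombined1}--\eqref{eqn:EDDcombined2} so that the favoured class indeed has the smaller waiting time. I would also need to note that $I(\bar u)$ and $\tilde I(\bar u)$ are finite and monotone in $\bar u$ (being integrals of the tail probabilities of $T_2(W)$ and $T_1(W)$) in order to justify that the resulting map is genuinely one-to-one rather than merely a formal algebraic identity.
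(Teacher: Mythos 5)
Your proof is correct and follows essentially the same route as the paper's: the same two-case split on the sign of $\bar u$ (equivalently $\beta \lessgtr 1$), equating a single mean-waiting-time coordinate of the closed-form 2-class DDP expressions with the EDD expressions, solving the resulting equation for $\beta$, and invoking the endpoint limits and monotonicity of $I(\bar u)$ and $\tilde I(\bar u)$ to obtain the one-to-one correspondence. The only deviation is cosmetic: for $\bar u<0$ you match the class-2 coordinate (justified by your explicit conservation-law reduction, and using the sign-consistent form $E(W_2^{EDD})=E(W)+\rho_1\tilde I(\bar u)$, which quietly corrects the sign slip in the paper's displayed Equation (\ref{eqn:EDDcombined2})), whereas the paper matches the class-1 coordinate in both regimes; by conservation both choices yield the identical branch $\beta=\bigl(W_0-(1-\rho_1)(1-\rho)\tilde I(\bar u)\bigr)/\bigl(W_0+\rho_1(1-\rho)\tilde I(\bar u)\bigr)$.
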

\begin{proof}
See Appendix \ref{proof:lemmaclaim}.
\end{proof}
Note that $\beta$ is a monotone function of $\tilde{I}(\bar{u})$, and $\tilde{I}(\bar{u})$ is a monotone function of $\bar{u}$. Hence by the property of monotonicity, there is a one-to-one transformation between $\bar{u}$ and $\beta$. Since extended DDP is a complete dynamic priority discipline in case of two classes, EDD will also be complete. Thus, we have following result:

\begin{thm}\label{clm:EDDcomplete}
\textit{EDD dynamic priority policy is complete in 2-class queues.}
\end{thm}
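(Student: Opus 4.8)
The plan is to combine two ingredients that are already in place: the completeness of the extended DDP class in the 2-class setting (Appendix \ref{proof:DDP_cmplt}), and the explicit parameter correspondence of Lemma \ref{clm:equivalenceDDPnEDD}. Recall that for two classes the achievable region is the closed line segment joining $\mathbf{w_{12}}$ and $\mathbf{w_{21}}$ (Figure \ref{2classline}), and that extended DDP is complete: as its parameter $\beta$ ranges over its full range, the resulting vector $(E(W_1^{DDP}),E(W_2^{DDP}))$ traces out this entire segment, both endpoints included. It therefore suffices to show that the family of EDD waiting-time vectors, indexed by $\bar{u}\in[-\infty,\infty]$, coincides with the family of extended-DDP vectors.

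First I would invoke Lemma \ref{clm:equivalenceDDPnEDD}, which produces, for each $\bar{u}$, a value $\beta(\bar{u})$ such that the EDD policy with parameter $\bar{u}$ and the extended DDP policy with parameter $\beta(\bar{u})$ yield the same mean-waiting-time vector. Thus the image of the EDD family is contained in the image of the extended-DDP family, and the problem reduces to showing that the map $\bar{u}\mapsto\beta(\bar{u})$ is onto the full parameter range of extended DDP.

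Next I would establish that $\bar{u}\mapsto\beta(\bar{u})$ is a continuous, strictly monotone bijection. On $\bar{u}<0$ the integral $\tilde{I}(\bar{u})=\int_0^{-\bar{u}}P(T_1(W)>y)\,dy$ is continuous and nondecreasing in $-\bar{u}$, and on $\bar{u}\ge 0$ the integral $I(\bar{u})=\int_0^{\bar{u}}P(T_2(W)>y)\,dy$ is continuous and nondecreasing in $\bar{u}$; both vanish at $\bar{u}=0$, where the two branches of $\beta(\bar{u})$ agree at $\beta=1$ (the FCFS point). Since $\beta$ is a monotone M\"obius-type function of each integral, as observed in the remark following Lemma \ref{clm:equivalenceDDPnEDD}, composing the two monotonicities yields a strictly monotone, hence injective, continuous map of $\bar{u}$ onto an interval of $\beta$-values.

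The main point to verify, and the step I expect to be the genuine obstacle, is surjectivity at the extremes: I would check that $\bar{u}\to-\infty$ and $\bar{u}\to+\infty$ drive $\beta(\bar{u})$ to the two endpoint parameters of extended DDP, equivalently that the limiting EDD vectors are exactly $\mathbf{w_{12}}$ (strict priority to class 1) and $\mathbf{w_{21}}$ (strict priority to class 2). This is already visible from equations (\ref{eqn:EDDcombined1})--(\ref{eqn:EDDcombined2}), where $\bar{u}=-\infty$ and $\bar{u}=\infty$ give the strict-priority waiting times for class 1 and class 2 respectively. Combined with continuity and monotonicity, this shows that $\beta(\bar{u})$ sweeps the entire extended-DDP range as $\bar{u}$ traverses $[-\infty,\infty]$, so that no interior point of the segment is skipped and both endpoints are attained. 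Hence every point of the achievable segment is realized by some EDD policy, and completeness of EDD follows. The delicate part is precisely this limit and range analysis of $\tilde{I}$ and $I$, rather than the algebra of the correspondence, which Lemma \ref{clm:equivalenceDDPnEDD} already supplies.
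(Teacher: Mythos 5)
Your proposal is correct and follows essentially the same route as the paper's own proof of Theorem \ref{clm:EDDcomplete}: invoke Lemma \ref{clm:equivalenceDDPnEDD}, use the monotonicity and continuity of $\tilde{I}(\bar{u})$ and $I(\bar{u})$ together with the limits at $\bar{u}=0$ and $\bar{u}=\pm\infty$ to obtain a one-to-one correspondence between $\bar{u}\in[-\infty,\infty]$ and $\beta\in[0,\infty]$, and then transfer completeness from extended DDP to EDD. (The paper also records an independent proof in Appendix \ref{proof:extra} that equates EDD waiting times directly with convex combinations of the strict-priority extreme points, but your argument matches the main-text proof.)
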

An independent proof of above theorem without exploiting the completeness of extended DDP can be seen in Appendix~\ref{proof:extra}.
\subsection{Relative dynamic priority policy}
In case of two classes, mean waiting time is given by (see \cite{haviv2}):
\begin{equation}\label{eqn:2class_relative}
E(W_i^{RP}) = \dfrac{1-\rho p_i}{(1- \rho_1 - p_2 \rho_2)(1 - \rho_2 - p_1\rho_1)-p_1 p_2 \rho_1\rho_2}W_0, ~~i=1,2
\end{equation}   
  where $\rho = \rho_1 + \rho_2$ and $p_1 + p_2 = 1$. Note that $p_1 =1$ and $p_2 = 1$ result in corresponding mean waiting times when strict higher priority is given to class 1 and class 2 respectively. Hence, we can expect a one-to-one transformation from DDP to relative priority priority. We find such an explicit {nonlinear} transformation below.

\begin{lem}\label{clm:RPequiv}
\textit{Delay dependent priority policy and relative priority policy are equivalent in two classes and priority parameters ($\beta$ and $p_1$) are related as:}{
\begin{equation}
\beta = \frac{\mu-\lambda p_1}{(2\mu-\lambda)p_1}\mathbf{1}_{\{0 \le  p_1 < \frac{1}{2}\}} + \frac{(2\mu-\lambda)(1-p_1)}{\mu-\lambda(1-p_1)}\mathbf{1}_{\{\frac{1}{2} \le  p_1 \le 1\}}
\end{equation}}
\end{lem}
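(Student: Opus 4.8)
The plan is to mirror the proof of Lemma~\ref{clm:equivalenceDDPnEDD}. Both the extended DDP family and the relative priority (RP) family are complete, so each traces out the entire conservation-law segment of Figure~\ref{2classline}, and the conservation law (\ref{ConLaw}) fixes $E(W_2)$ once $E(W_1)$ is known. Hence the two policies are equivalent if and only if, for every RP parameter $p_1\in[0,1]$, there is a unique extended-DDP parameter $\beta$ for which the two achieve the \emph{same} mean-waiting-time vector; and to pin down that $\beta$ it suffices to equate a single coordinate, namely the mean wait of the lower-priority class, where the DDP expression is simplest.

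Concretely, I would first record the closed-form 2-class DDP waits obtained from the recursion (\ref{eqn:DDP_recursion}) together with the extended-DDP parametrization of Appendix~\ref{proof:DDP_cmplt}, in which $\beta$ is the ratio of the two priority-accrual rates, $\beta=1$ is global FCFS, and the endpoints $\beta=0,\infty$ give strict priority to class~1 and class~2 respectively. The lower-priority class then has a wait of the form $\dfrac{W_0/(1-\rho)}{1-\rho_{\mathrm{high}}(1-\gamma)}$, with $\gamma=\beta$ and lower class~$2$ when class~1 is favoured ($\beta\le 1$), and $\gamma=1/\beta$ and lower class~$1$ when class~2 is favoured ($\beta\ge 1$). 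Next I would take the RP expression (\ref{eqn:2class_relative}) with $p_2=1-p_1$, equate it to the matching DDP coordinate, and solve the resulting rational equation for $\beta$. Splitting according to which class is favoured --- class~2 for $0\le p_1<\tfrac12$ and class~1 for $\tfrac12\le p_1\le 1$ --- produces the two branches of the indicator, the $\gamma=\beta$ versus $\gamma=1/\beta$ distinction being exactly what makes the two branches algebraically different.

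I expect the main obstacle to be the algebraic elimination in the RP denominator $(1-\rho_1-p_2\rho_2)(1-\rho_2-p_1\rho_1)-p_1p_2\rho_1\rho_2$. For the displayed $\lambda,\mu$ form I would specialize to equal per-class loads $\rho_1=\rho_2=\lambda/(2\mu)$ (symmetric arrival and service rates, total rate $\lambda$), in which this denominator collapses to the $p_1$-independent value $(1-\rho_1)(1-\rho)$. This collapse is the crux: it lets $W_0$ and all the terms quadratic in $p_1$ cancel and leaves a genuinely one-step solution for $\beta$, giving $\beta=\dfrac{\mu-\lambda p_1}{(2\mu-\lambda)p_1}$ on the class-2-favoured branch and $\beta=\dfrac{(2\mu-\lambda)(1-p_1)}{\mu-\lambda(1-p_1)}$ on the class-1-favoured branch.

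Finally I would verify consistency: both branches yield $\beta=1$ at $p_1=\tfrac12$ (so the map is continuous there and coincides with global FCFS), while $p_1=1\mapsto\beta=0$ and $p_1=0\mapsto\beta=\infty$ recover the two strict-priority vertices. Monotonicity of each branch then shows $p_1\mapsto\beta$ is a decreasing bijection from $[0,1]$ onto $[0,\infty]$, so every extended-DDP vector is realized by a unique RP parameter and conversely; together with the completeness of extended DDP this gives both the claimed equivalence and, exactly as in Theorem~\ref{clm:EDDcomplete}, the completeness of relative priority. Apart from the denominator simplification and the bookkeeping of the favoured-class split, everything reduces to matching a single coordinate and is routine.
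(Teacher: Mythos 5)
Your overall strategy --- equate one mean-waiting-time coordinate of extended DDP (Equations (\ref{eqn:DDPclass1})--(\ref{eqn:DDPclass2})) with the RP expression (\ref{eqn:2class_relative}), solve for $\beta$, split into the two favoured-class cases, and let the conservation law carry the other coordinate --- is exactly the paper's proof, and your endpoint, continuity and monotonicity checks are correct. But there is a genuine gap: you specialize to equal per-class loads $\rho_1=\rho_2=\lambda/(2\mu)$ and claim this collapse of the RP denominator to $(1-\rho_1)(1-\rho)$ ``is the crux'' of the elimination. That restriction is both unnecessary and insufficient. Unnecessary, because for arbitrary $\rho_1,\rho_2$ the quadratic terms in $p_1$ in the RP denominator cancel identically, giving
$(1-\rho_1-p_2\rho_2)(1-\rho_2-p_1\rho_1)-p_1p_2\rho_1\rho_2=(1-\rho)\bigl(1-\rho_2-p_1(\rho_1-\rho_2)\bigr)$,
an expression affine in $p_1$ (one uses $\rho_2(1-\rho_2)-\rho_1(1-\rho_1)=-(1-\rho)(\rho_1-\rho_2)$); equating $W_1$ under RP with $W_1^{\beta}$ from (\ref{eqn:DDPclass1}) and cross-multiplying, the class-asymmetric factor $\rho_2$ cancels and one gets $p_1=\frac{1+(1-\rho)(1-\beta)}{2-\rho(1-\beta)}$, hence the two displayed branches with $\rho=\lambda/\mu$ --- this is precisely the paper's computation, valid for any split of the total load. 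Insufficient, because the lemma is a statement about general 2-class M/G/1 queues: it is invoked later for manifestly asymmetric classes (the pricing problem of Section \ref{joint_pricing} with $\lambda_p\neq\lambda_s$, and the $c/\rho$ rule), and the substantive content of the formula is that the $p_1$--$\beta$ correspondence depends on $(\rho_1,\rho_2)$ only through the total $\rho$. A proof restricted to $\rho_1=\rho_2$ cannot establish --- or even detect --- that fact, since in the symmetric case $\rho_1$ and $\rho_2$ are themselves determined by $\rho$.

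A smaller point: your opening premise that both families ``are complete'' is circular in the paper's logical order, since completeness of relative priority (Theorem \ref{clm:RPcomplete}) is deduced \emph{from} this lemma together with completeness of extended DDP. It is also superfluous: all your derivation (and the paper's) actually needs is the conservation law (\ref{ConLaw}), which reduces equality of the two waiting-time \emph{vectors} to equality of a single coordinate. Dropping the completeness premise and running your coordinate-matching argument for general $\rho_1,\rho_2$ would repair the proof completely.
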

\begin{proof}
See Appendix~\ref{proof:lemmaclaim}. 
\end{proof}
Above lemma gives one-to-one transformation between $p_1$ and $\beta$. Since extended DDP is a complete dynamic priority discipline in case of two classes, relative priority will also be complete for two classes of customers:

\begin{thm}\label{clm:RPcomplete}
\textit{Relative dynamic priority scheme is complete in 2-class queues.}
\end{thm}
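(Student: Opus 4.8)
The plan is to mirror the argument already used for Theorem~\ref{clm:EDDcomplete} and transport the completeness of extended DDP (established in Appendix~\ref{proof:DDP_cmplt}) to relative priority through the explicit parameter map of Lemma~\ref{clm:RPequiv}. Recall that extended DDP with parameter $\beta$ ranging over the extended half-line $[0,\infty]$ achieves every point of the achievable segment of Figure~\ref{2classline}, the two strict-priority endpoints $\mathbf{w_{12}}$ (at $\beta=0$) and $\mathbf{w_{21}}$ (at $\beta=\infty$) included, with $\beta=1$ the neutral point. It therefore suffices to show that the map $p_1\mapsto\beta(p_1)$ of Lemma~\ref{clm:RPequiv} is a bijection of the relative-priority parameter range $[0,1]$ onto this full extended range $[0,\infty]$; completeness of relative priority then follows immediately by composition, exactly as for EDD.

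First I would verify that $\beta(\cdot)$ is continuous and strictly monotone on $[0,1]$. On each of the two pieces $[0,\tfrac12)$ and $[\tfrac12,1]$ the expression is a ratio of affine functions of $p_1$ whose denominators stay strictly positive under the stability condition $\rho<1$ together with the positivity of the parameters; hence each piece is continuous and, by inspecting the sign of the derivative (equivalently the cross term of the ratio), strictly monotone. At the junction $p_1=\tfrac12$ both pieces evaluate to $\beta=1$, so $\beta(\cdot)$ is continuous across $\tfrac12$. Evaluating the two ends, $p_1\to0^{+}$ forces the first-piece denominator to $0$ while its numerator stays positive, giving $\beta\to+\infty$, whereas $p_1=1$ makes the second-piece numerator vanish, giving $\beta=0$. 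Thus $\beta$ decreases continuously and strictly from $+\infty$ to $0$ as $p_1$ increases from $0$ to $1$, so it is a decreasing bijection of $[0,1]$ onto $[0,\infty]$.

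With the bijection in hand the conclusion is direct: for each target vector $\mathbf{W}$ on the achievable segment there is a unique $\beta\in[0,\infty]$ with extended DDP achieving $\mathbf{W}$, and the unique preimage $p_1=\beta^{-1}(\beta)$ makes relative priority achieve the same $\mathbf{W}$ by Lemma~\ref{clm:RPequiv}; in particular $p_1=1$ recovers $\mathbf{w_{12}}$ and $p_1=0$ recovers $\mathbf{w_{21}}$. Hence relative priority is complete in 2-class queues.

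\textbf{The main obstacle} is the pair of extreme points. The equivalence of Lemma~\ref{clm:RPequiv} is natural for interior parameters, and I must make sure the boundary values $p_1\in\{0,1\}$ genuinely reproduce the strict-priority vectors rather than merely approach them, since the positivity requirement $p_i>0$ excludes the exact endpoints from the original definition. Because the closed-form expression~(\ref{eqn:2class_relative}) for $E(W_i^{RP})$ is a ratio of polynomials that is continuous in $p_1$ on the whole of $[0,1]$ and reduces to the strict-priority waiting times at $p_1\in\{0,1\}$, I would close this gap by a direct continuity/intermediate-value argument on~(\ref{eqn:2class_relative}): as $p_1$ sweeps $[0,1]$ the vector $(E(W_1^{RP}),E(W_2^{RP}))$ remains on the conservation-law line~(\ref{ConLaw}) and moves continuously from $\mathbf{w_{21}}$ to $\mathbf{w_{12}}$, so its image is the entire segment. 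This simultaneously removes the boundary subtlety and yields a proof of completeness independent of extended DDP.
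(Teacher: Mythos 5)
Your proposal is correct and follows essentially the same route as the paper: the main text proves Theorem \ref{clm:RPcomplete} precisely by combining the completeness of extended DDP with the one-to-one parameter map of Lemma \ref{clm:RPequiv}, just as you do. Your closing continuity argument on Equation (\ref{eqn:2class_relative}) also parallels the paper's independent appendix proof, which equates the relative-priority waiting time with the convex combination $\alpha W_{1}^{12}+(1-\alpha)W_{1}^{21}$ and shows that $p_1$ sweeps $[0,1]$ monotonically (endpoints included) as $\alpha$ does.
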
  
An independent proof of above theorem without exploiting the completeness of extended DDP is also given in Appendix~\ref{proof:extra}. 

\subsection{{HOL-PJ dynamic priority policy}}
It can be observed from Equation (\ref{eqn:EDD_recursion}) and (\ref{eqn:holpj_recursion}) that the mean waiting time recursion in HOL-PJ is same as that in EDD priority policy. Urgency number and overdue in EDD correspond to delay requirement and excessive delay in HOL-PJ. Similar to EDD, we consider the more general setting in HOL-PJ where $D_1,D_2 \geq 0$ is the delay requirement associated with class 1 and class 2. Let $\bar{D} = D_1- D_2$ be the parameter associated with HOL-PJ similar to $\bar{u}$ in EDD. Hence, we have the following theorem from our previous result on equivalence of EDD and DDP and completeness of EDD dynamic priority for 2-class queues. 
\begin{thm}
\textit{{There is a one-to-one nonlinear transformation for  {the mean waiting time vector of} HOL-PJ and extended DDP, and hence HOL-PJ is complete for 2-class M/G/1 queues.}} 
\end{thm}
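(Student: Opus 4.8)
The plan is to leverage the structural identity between the mean-waiting-time recursions for HOL-PJ and EDD, and then pull back the one-to-one map already constructed between EDD and extended DDP. Comparing Equation (\ref{eqn:holpj_recursion}) term by term with the EDD recursion (\ref{eqn:EDD_recursion}) for $N=2$, the two expressions coincide once we identify the delay requirements $D_i$ with the urgency numbers $u_i$: the positive sum over lower-indexed classes carries the limit $D_k-D_j$ (respectively $u_k-u_i$) and the tail $P(W_k>t)$, while the negative sum over higher-indexed classes carries the limit $D_j-D_k$ (respectively $u_i-u_k$) and the tail $P(W_j>t)$, with the same coefficients $\rho_j$ and the same constant $E(W)=W_0/(1-\rho)$. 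Hence, with $(u_1,u_2)=(D_1,D_2)$, the defining systems for the two disciplines are literally the same, so their mean waiting time vectors coincide.

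Concretely, I would first set up an \emph{extended} HOL-PJ exactly as was done for EDD: allow $D_1,D_2\ge 0$ and let $\bar D=D_1-D_2$ range over the extended line $[-\infty,\infty]$, with $\bar D<0$ giving class 1 higher priority, $\bar D>0$ giving class 2 higher priority, and the endpoints $\bar D=\pm\infty$ recovering the two strict-priority vectors $\mathbf{w_{12}}$ and $\mathbf{w_{21}}$. Under the identification $\bar u=\bar D$, the extended two-class expressions (\ref{eqn:EDDcombined1})--(\ref{eqn:EDDcombined2}) then describe the HOL-PJ vector verbatim. I would then simply compose with Lemma \ref{clm:equivalenceDDPnEDD}: that lemma already gives an explicit one-to-one nonlinear map $\beta=\beta(\bar u)$ between the extended DDP parameter $\beta$ and the EDD parameter $\bar u$, so substituting $\bar u=\bar D$ produces a one-to-one nonlinear map between $\beta$ and the HOL-PJ parameter $\bar D$ under which the mean waiting time vectors agree. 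Completeness is then immediate: extended DDP is complete in 2-class queues (Appendix \ref{proof:DDP_cmplt}), and the HOL-PJ vector sweeps out exactly the same set as $\bar D$ ranges over $[-\infty,\infty]$, so HOL-PJ attains every point of the Kleinrock segment between $\mathbf{w_{12}}$ and $\mathbf{w_{21}}$.

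The step I expect to require the most care is justifying that the identical recursions really yield identical mean waiting time vectors, since (\ref{eqn:EDD_recursion}) and (\ref{eqn:holpj_recursion}) are implicit equations involving the full tail distributions $P(W_k>t)$ rather than the means alone. The safe way to close this gap is a sample-path argument showing HOL-PJ and EDD are the same discipline: a HOL-PJ customer's excessive delay at time $t$ is $(t-t_i)-D_i$, which is largest precisely when $t_i+D_i$ is smallest, so serving the largest excessive delay is the same as serving the smallest $t_i+D_i$; this is exactly the EDD rule of serving the smallest $t_i+u_i$ once $u_i=D_i$. As both are non-preemptive, work-conserving and non-anticipative, every performance functional --- not merely the mean --- coincides, which validates the recursion identity and the equivalence. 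A minor additional check is that the strict-priority endpoints $\bar D=\pm\infty$ are covered; these are the vertices $\mathbf{w_{12}}$, $\mathbf{w_{21}}$ supplied by the extended DDP construction, and monotonicity of $\beta(\bar D)$ (inherited from $\beta(\bar u)$ via Lemma \ref{clm:equivalenceDDPnEDD}) guarantees the map is genuinely one-to-one over the whole extended range.
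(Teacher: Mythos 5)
Your proposal is correct and follows essentially the same route as the paper: observe that the HOL-PJ recursion (\ref{eqn:holpj_recursion}) coincides with the EDD recursion (\ref{eqn:EDD_recursion}) under the identification $u_i = D_i$, pass to the extended parameter $\bar{D} = D_1 - D_2 \in [-\infty,\infty]$, and compose with Lemma \ref{clm:equivalenceDDPnEDD} and the completeness of extended DDP. Your sample-path observation (serving the largest excessive delay $(t-t_i)-D_i$ is the same as serving the smallest $t_i+D_i$) is a useful strengthening that rigorously justifies the recursion identity, which the paper asserts only by inspection of the two equations.
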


\subsection{Probabilistic priority policy}\label{2classPP}
Approximate mean waiting time expressions under probabilistic priority (PP) scheduling are derived in \cite{jiang2002delay} for two classes of customers. Customers arrive according to independent Poisson processes with rates $\lambda_1$ and $\lambda_2$ for class 1 and class 2 respectively. The service times of class $i$ customers are independent, identically distributed, stochastic variables which have general distribution with finite first and second moments $s_i$ and $s_i^{(2)}$, $i = 1,2$. Let $\bar{i}$ denote the class other than class $i$, i.e., if $i=1,2$ then $\bar{i} = 2,1$ respectively. In each service cycle, $\omega_i$ denotes the probability that the head of the line customer from class $i$ is served when queue $\bar{i}$ is non empty. From definition of $\omega_i$, we have
$$\omega_1 = p_1,~~\omega_2 =1- p_1 \text{ and }\omega_i + \omega_{\bar{i}} = 1$$
where notation $p_i$ is as discussed in Section \ref{PP}. Let $\bar{W}_i$ be the mean waiting time in queue for class $i$, $i=1,~2$. Bounds for average waiting time are derived in \cite{jiang2002delay} along with following results which are useful in exploring completeness of PP scheduling discipline. 
\begin{enumerate}
\item For $\omega_i \in [0,1],~i=1,2$, $\bar{W}_i$ is continuous and monotonically decreasing; $\bar{W}_{\bar{i}}$ is continuous and monotonically increasing.  
\item For a given value $\bar{W}^*_i \in [\bar{W}_i^{'}, \bar{W}_i^{''}],~i=1,2$, there must exist $\omega_i^* \in [0,1]$ such that when $\omega_i = \omega_i^*,~\bar{W}_i = \bar{W}^*_i$ where $\bar{W}_i^{'}$ and $\bar{W}_i^{''}$ are the average waiting times when $\omega_i=1$ and $\omega_i = 0$ respectively.
\end{enumerate}
Note that in case of two classes $p_2$ is always 1 and $\omega_1 = 1$ implies $p_1 = 1$. It is clear from the mechanism of PP queue discipline that $p_1=p_2 = 1$ implies class 1 is given strict priority over class 2. Similarly, $\omega_1 = 0$ implies $p_1 = 0$. This will correspond to class 2 having strict priority over class 1. Hence extreme points of line segment in Figure \ref{2classline} are achievable. Any point on the line segment is achievable from above two results by continuously varying $\omega_1$ in range $(0,1)$. Thus, the following result holds. 

\begin{thm}\label{PP_completeness}
Probabilistic priority queue discipline is \textit{complete} for 2-class $M/G/1$ queues.
\end{thm}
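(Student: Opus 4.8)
The plan is to establish completeness of the probabilistic priority (PP) scheme for 2-class M/G/1 queues by verifying that the set of mean waiting time vectors $(\bar{W}_1, \bar{W}_2)$ achievable by varying the single free parameter $\omega_1 = p_1 \in [0,1]$ coincides with the entire achievable line segment of Figure \ref{2classline}, namely the closed segment with endpoints $\mathbf{w_{12}}$ and $\mathbf{w_{21}}$. Since the achievable region for 2 classes is one-dimensional (it lies on the line defined by Kleinrock's conservation law \eqref{ConLaw}), I only need to show that PP attains both extreme points and every point in between, without appealing to the equivalence with extended DDP used for the other schemes.

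First I would pin down the two endpoints. The key observation is that the PP mechanism reduces to strict priority at the boundary values of the parameter: setting $p_1 = p_2 = 1$ (equivalently $\omega_1 = 1$) means queue 1 is always served when nonempty, so class 1 receives strict non-preemptive priority and the resulting vector is exactly $\mathbf{w_{12}}$; symmetrically $\omega_1 = 0$ (so $p_1 = 0$) gives class 2 strict priority and yields $\mathbf{w_{21}}$. This identifies $\bar{W}_i'$ and $\bar{W}_i''$ (the waiting times at $\omega_i = 1$ and $\omega_i = 0$) with the coordinates of the two extreme points, so the parametrized family already reaches the two vertices of the segment.

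Next I would fill in the interior. Here I would invoke the two structural properties of PP cited from \cite{jiang2002delay} and recalled just before the theorem: for $\omega_i \in [0,1]$ the map $\omega_i \mapsto \bar{W}_i$ is continuous and monotone (decreasing for class $i$, increasing for the other class), and for every target value $\bar{W}_i^* \in [\bar{W}_i', \bar{W}_i'']$ there exists $\omega_i^* \in [0,1]$ realizing it. Combining continuity and monotonicity in the single parameter $\omega_1$, as $\omega_1$ sweeps $(0,1)$ the point $(\bar{W}_1(\omega_1), \bar{W}_2(\omega_1))$ traces a continuous curve in the plane; monotonicity guarantees it is injective and, since both coordinates are constrained to lie on the conservation-law line, that curve is precisely the open segment between $\mathbf{w_{12}}$ and $\mathbf{w_{21}}$. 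Together with the two endpoints from the previous step, every point of the closed segment is achieved, which is the definition of completeness.

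The main obstacle I anticipate is not in the topological argument but in justifying that the interpolation properties of \cite{jiang2002delay} hold for the \emph{exact} dynamics rather than the approximate mean-waiting-time formula, since the paper repeatedly emphasizes that only an approximate closed-form expression is available for PP. The safest route, which I would follow, is to lean on the monotonicity and intermediate-value properties as qualitative structural facts about the true system (they do not require an explicit formula, only continuity of the waiting times in $\omega_1$ and the correct identification of the two strict-priority limits), and to note that conservation law \eqref{ConLaw} forces the image to remain on the line, so one never needs the exact second coordinate once the first is controlled. A subtle point worth a sentence is the tie-breaking convention when a queue is empty (the $\hat{p}_i$ adjustment), which only affects behavior at the boundary and is consistent with the strict-priority identification at $\omega_1 \in \{0,1\}$.
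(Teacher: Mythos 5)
Your proposal is correct and follows essentially the same route as the paper: identify the strict-priority endpoints $\mathbf{w_{12}}$ and $\mathbf{w_{21}}$ with $\omega_1 = 1$ and $\omega_1 = 0$, then use the continuity/monotonicity and intermediate-value properties of $\bar{W}_i$ in $\omega_1$ cited from \cite{jiang2002delay} to sweep out the interior of the conservation-law segment. Your added remarks (that these are qualitative properties of the true system rather than of the approximate formula, and that the conservation law pins down the second coordinate) are sensible clarifications of steps the paper leaves implicit, not a different argument.
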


Exact transformations to other priority schemes are not tractable as only approximate mean waiting times are known for the $PP$ priority scheme. 
{
\section{{{Various applications of complete policies}}}\label{applications}}
{In this section, we solve some relevant optimal control problems by exploiting the completeness of different dynamic priority schemes introduced in Section \ref{sec:description}. }

\subsection{Optimal scheduling schemes}
In this section, we use the idea of completeness to obtain the optimal scheduling policy for high performance computing facility and cloud computing systems {in Sections \ref{hpc_facility} and \ref{cloud_computing} respectively}. Also, we recover the optimality of celebrated $c/\rho$ rule (see \cite{mitranibook}, \cite{yao2002dynamic}) for 2-class M/G/1 queue by an elegant argument {in Section \ref{cmu_rule}. Further, a complex joint pricing and scheduling problem is simplified for a wider (the set of all non-preemptive, {non-anticipative} and work conserving) class of scheduling policies in Section \ref{joint_pricing}. }

\subsubsection{{Utility maximization in high performance computing facility}}\label{hpc_facility}
We consider the problem of finding the scheduling policy which maximizes the utility for High Performance Computing (HPC) facility. HPC facilities provide high-speed and large-scale computer processing platforms. The computing power of this high-end technology being scarce, jobs are queued up and will be completed eventually. The utility maximization of such an expensive queuing resource is hence desirable. There is a market of users who are willing to pay a higher usage charge to obtain lower mean waiting time for their jobs. We consider the problem of utility maximization for such a HPC center by  casting it as priority based resource allocation in multi-class queue to achieve differential service.
{

We now provide a specific example of an HPC system {which is being operated as above}. The National Renewable Energy Laboratory (NREL) HPC system is one of the largest HPC systems in the world dedicated to advancing renewable energy and energy efficiency technologies \cite{Users}.  Users are charged certain price for using this facility. However, users can reduce their queue waiting time by paying more to HPC facility. Jobs are given (non-preemptive strict) priority if they pay twice the normal rate \cite{Queues}. The results of this section provide the revenue optimal scheduling scheme for NREL type of HPC systems. }

 Let $\lambda_R$ be the arrival rate for regular jobs and $\lambda_P$ be arrival rate for prime jobs (customers) who can pay higher price for faster service. Assume that $\lambda_P$ and $\lambda_R$ are fixed and follow independent Poisson processes\footnote{This is a standard assumption on arrival processes.} and let the service time be general with finite second moment. Let the stationary mean waiting time for prime and regular class be $E(W_P^\pi)$ and $E(W_R^\pi)$ respectively, for a scheduling policy   $\pi\in \mathcal{F}$ where $\mathcal{F}$ is the set of all non-preemptive, non-anticipative and work conserving policies. Further, assume that the price ($\theta$) for prime class is linearly dependent on $E(W_P^\pi)$ for that class:
$$\theta = a-bE(W_P^\pi)$$
where $a$ and $b$ are the (positive) sensitivity constants driven by market. Note that above pricing scheme is natural and captures the fact that one has to pay higher price to reduce the stationary mean waiting time. The utility for the HPC facility under the given scheduling scheme $\pi$: 
$$U^\pi := w_1 (\theta\lambda_P) + w_2 (E(W_R^\pi))$${
where $w_1$ and $w_2$ are the given weights associated with revenue from prime class and service level for regular class respectively. Note that each component in above utility function depends on the scheduling scheme $\pi$. The objective is to find a scheduling scheme that maximizes the utility among the set of all non-preemptive, non-anticipative and work conserving scheduling disciplines, $\mathcal{F}$. Mathematically, 
\begin{equation}\label{utility_maximization}
\max_{\pi \in \mathcal{F}}~~U^\pi
\end{equation}
By using completeness of relative priority from Theorem \ref{clm:RPcomplete}, utility maximization problem simplifies to:
$$\max_{0 \le p \le 1}~~ w_1 (\theta\lambda_P) + w_2 (E(W_R^p))$${
Note that the above problem is theoretically tractable as compared to the problem (\ref{utility_maximization}) and can be solved by the optimization methods involving second degree polynomials\footnote{The denominator of mean waiting time expression is of second degree in $p$ under the relative priority scheduling scheme (see Equation (\ref{eqn:2class_relative})).}.} Alternatively, consider the following revenue maximization problem with a guaranteed service level constraint on regular type of customers:
$$\max_{\pi \in \mathcal{F}}~~  \theta\lambda_P$$
\hspace{6cm} subject to
\hspace{3cm} $$E(W_R^\pi) \le S_R$$
for a given service level threshold $S_R$ for regular jobs. 
{Again, by invoking Theorem \ref{clm:RPcomplete}, one can achieve the theoretical tractability similar to problem (\ref{utility_maximization}).}
\subsubsection{{Revenue rate maximization in cloud computing}}\label{cloud_computing}
Broadly speaking, cloud computing is the delivery of on-demand computing resources over the internet. It provides the capability through which typically real-time scalable resources such as files, programs, data, hardware, computing, and the third party services can be {accessed} via the network to users. With the cloud, users can access the information technology (IT) resources at any time and from multiple locations, track their usage levels, and scale up their service delivery capacity as needed, without large upfront investments in software or hardware.
Pricing schemes are emerging as an attractive alternative to cope with unused capacities and uncertain demand patterns in the context of cloud computing (see \cite{agmon2013deconstructing}, \cite{borkar2017index}). { In today's world, the fundamental metrics for data centers (cloud computing) are throughput, transactional response time (delay) and the cost \cite{Metrics}. The cloud computing facility has often been modeled as multi-class queuing systems in the literature (see \cite{guo2014dynamic}). }

{
We model a cloud computing server for 2-classes of incoming traffic which is delay as well as price sensitive.} Each class of traffic has a certain Service Level Agreement (SLA) in terms of stationary mean waiting time (delay). Such SLA can also be viewed as deadline for each type of jobs. Cloud computing service provider would maximize the {revenue rate} generated by the throughput of the system. {We devise a method for obtaining optimal scheduling scheme for cloud computing server by formulating an appropriate optimization problem.}

{
Consider the two separate classes of incoming traffic into the system according to independent Poisson arrivals with arrival rates $\lambda_1$ and $\lambda_2$ for class 1 and class 2 respectively. The cloud computing server serves the jobs with service rate $\mu$ with an independent general distribution. {Let $E(W_i^\pi)$ be the stationary mean waiting time for class $i$, $i\in \{1,2\}$, for a scheduling policy  $\pi\in \mathcal{F}$.} Note that the throughput of class 1 and class 2 would be exactly same as the departure rate for class 1 and class 2 respectively. Further, the arrival rates and departure rates are same for a stable queue. Thus, the departure rate (or throughput) for class 1 and class 2 would be $\lambda_1$ and $\lambda_2$ respectively. Throughput generates revenue for the system. Let $\theta_1$ and $\theta_2$ be the price charged for the incoming traffic of class 1 and class 2 respectively. Thus, the total revenue rate will be $\theta_1 \lambda_1 + \theta_2 \lambda_2$. We assume that the incoming traffic to each class is linearly sensitive to the price and stationary mean waiting time  ($E(W_i^\pi)$): 
$$\lambda_i = a_i-b_i\theta_i -c_iE(W_i^\pi)~\text{ for }i=1,~2,$$
where $a_i,~b_i$ and $c_i$ are the (positive) sensitivity constants driven by market with threshold service level agreement,  $T_i$, $i\in \{1,2\}$, for class $i$ traffic. Now, consider the problem of maximizing the revenue rate for cloud computing service provider with SLA constraint for each class of incoming traffic over the scheduling policies $\pi \in \mathcal{F}$. Mathematically, 
$$\max_{\pi \in \mathcal{F}} ~\theta_1 \lambda_1 + \theta_2 \lambda_2$$
\hspace{5cm} subject to: 
$$E(W_1^\pi) \le T_1,~E(W_2^\pi) \le T_2,$$
where $\mathcal{F}$ is a set of all non-preemptive, non-anticipative and work conserving policies. Note that each of the constraint and the objective function in above optimization problem depends on scheduling policy $\pi\in \mathcal{F}$. By using completeness of relative priority from Theorem \ref{clm:RPcomplete}, the above problem simplifies to:
$$\max_{0\le p \le 1} ~\theta_1 \lambda_1 + \theta_2 \lambda_2$$
\hspace{5cm} subject to: 
$$E(W_1^p) \le T_1, ~E(W_2^p) \le T_2,$$
 which is theoretically tractable and can be {solved by the optimization methods involving second degree polynomials\footnote{The denominator of mean waiting time expression is of second degree in $p$ under the relative priority scheduling scheme (see Equation (\ref{eqn:2class_relative})).}.} }

\subsubsection{Optimality of $c/\rho$ rule in 2-class M/G/1 queues}\label{cmu_rule}
{
It is well known in literature (see \cite{mitranibook}, \cite{yao2002dynamic}) that a linear weighted combination of mean waiting time under policy $\pi$, $C^\pi:=\sum\limits_{i=1}^{N}c_i E(W_i^\pi)$, is minimized by $c/\rho$ rule when $\pi\in\mathcal{F}$.} Here $c_i$ and $W_i^\pi$, are the cost and mean waiting time associated with class $i,$ $i\in\{1,\cdots,N\}$, under policy $\pi \in \mathcal{F}$ respectively. This rule states that the optimal scheduling discipline with respect to objective $C^\pi$ is a strict priority scheme where priority is assigned in the {decreasing order} of ratios $c_i/\rho_i$.  

We give the idea for the proof of this result in 2-class M/G/1 queue by exploiting completeness results discussed in this paper. Consider problem \textbf{P1}: 
$$\mathbf{P1}~~\min_{\pi \in \mathcal{F}}~~c_1 E(W_{1}^{\pi})+c_2E(W_{2}^{\pi})$$
Note that optimizing over $\mathcal{F}$ is same as optimizing over set of relative priority by completeness property (see Theorem \ref{clm:RPcomplete}). Thus, \textbf{P1} is equivalent to following transformed problem \textbf{T1}:
 $$\mathbf{T1}~~\min_{ p \in [0, 1] }~~c_1 E(W_{1}^{p})+c_2 E(W_{2}^{p}) $$
 Above optimization problem \textbf{T1} can be easily solved to yield the optimal $c/\rho$ rule (see Appendix \ref{cmurule}).
}

\subsubsection{Joint pricing and scheduling problem}\label{joint_pricing}
The pricing model introduced in \cite{sinha2010pricing} solves a generic problem of pricing surplus server capacity of a stable M/G/1 queue for new (secondary) class of customers without affecting the service level of its existing (primary) customers. Inclusion of secondary customers increases the load and affects the service level of primary customers. Hence, admission control and appropriate scheduling of customers across classes is necessary. This queueing model used both admission control (by pricing and service level) and choice of queue discipline parameter for quality of service discrimination. The objective of the model is to solve joint pricing and scheduling problem such that resource owner's revenue will be maximized while maintaining the promised quality of service level for primary customers; it can be noted that these optimal decision variables can be interpreted as a unique Nash equilibrium of a suitably defined two person non-zero sum game where strategy sets of each player depend on the strategy used by another player \cite{NEremark}. {This pricing model under the preemptive scheduling scheme is solved in \cite{gupta2017optimal} and the revenue is compared between preemptive and non preemptive scheduling schemes.} We now give details of joint pricing and scheduling model below. 

\begin{figure}[h]\centering 
 \includegraphics[scale=0.4]{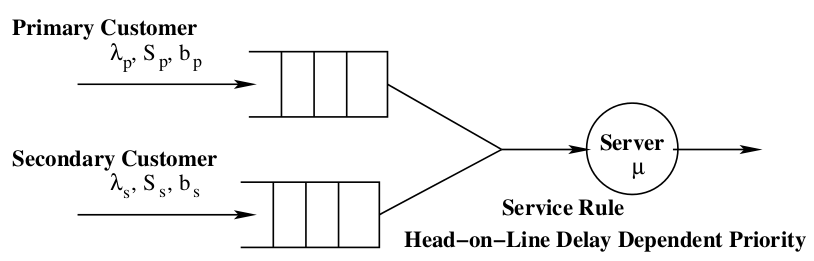}
 \caption{Schematic view of the model \cite{sinha2010pricing} }
 \label{fig:sks}
\end{figure}

\indent A schematic view of the model is shown in Figure \ref{fig:sks}. Primary class of customers arrive according to an independent Poisson arrival process with rate $\lambda_p$. $S_p$, the desired limit on the mean waiting time of the primary class of customers, indicates the service level offered. The service time of customers is independent and identically distributed with mean $1/\mu$ and variance $\sigma^2$, irrespective of customer class. Idea of the problem is to determine the promised limit on the mean waiting time of a secondary class of customers, $S_s$ and their unit admission price $\theta$ so as to maximize the revenue generated by the system, while constrained by primary class service levels.  The secondary class of customers arrive according to an independent Poisson arrival process with rate $\lambda_s$, which is dependent on $\theta$ and $S_s$:  $\lambda_s(\theta,S_s)=a - b\theta - cS_s$, where $a,~b,~c$ are positive constants driven by the market.

The mean waiting time of primary and secondary class customers depend on the queue scheduling rule. The scheduling discipline used in \cite{sinha2010pricing} was the non-preemptive delay dependent priority scheme, introduced by Kleinrock (see \cite{Kleinrock1964}). Let $\beta:=b_s/b_p$ is the delay dependent priority parameter. Note that $\beta=0$ corresponds to static high priority to primary class customers, $\beta=1$ is the global First Come First Serve (FCFS) queuing discipline across classes and $\beta = \infty$ corresponds to static high priority to secondary class customers. Let $W_{p}(\lambda_{s}, \beta)$ and $W_{s}(\lambda_{s}, \beta)$ be the mean waiting times of primary and secondary customers respectively, when the arrival rate of secondary jobs is $\lambda_{s}$ and queue management parameter is $\beta$.

Now select a suitable pair of pricing parameters $\theta$ and $S_{s}$ for the secondary class customers, a queue disciple management parameter $\beta$ and an appropriate admission rate for the secondary class customers $\lambda_{s}$, that will maximize the expected revenue from their inclusion, while ensuring that the mean waiting time to the primary class customers does not exceed a given quantity $S_p$. Thus, the revenue maximization problem, P0, is (see \cite{sinha2010pricing}): 
\begin{eqnarray}
\mbox{\textbf{P0:}\space}{\max_{\lambda_s,  \theta, S_s, \beta }~} \theta\lambda_s
\end{eqnarray}
\hspace{6cm} subject to:
\begin{eqnarray}
W_p(\lambda_s,\beta)\leq S_p \label{Pri_Qos}\\
W_s(\lambda_s,\beta)\leq S_s \label{Sec_Qos}\\
\lambda_s \leq \mu - \lambda_p \label{Sys_sta}\\
\lambda_s \leq a-b\theta-cS_s \label{Dem}\\
\lambda_s,\theta,S_s,\beta\geq 0 
\end{eqnarray}
Constraint (\ref{Pri_Qos}) and (\ref{Sec_Qos}) ensure the service level for primary and secondary class customers respectively. Constraint (\ref{Sys_sta}) is queue stability constraint. Constraint (\ref{Dem}) ensures that the mean arrival rate of secondary class customers should not exceed the demand generated
by charged price $\theta$ and offered service level $S_s$. This problem can be presented as following non-convex constrained optimization problem P1 (Constraints (\ref{Sec_Qos}) and (\ref{Dem}) are tight at optimality \cite{sinha2010pricing}):
\begin{eqnarray}
\mbox{\textbf{P1:}\space} \max_{\lambda_s,\beta}~ \dfrac{1}{b}\left( a\lambda_s -\lambda_s^2 -c \lambda_s W_s(\lambda_s,\beta)\right)  
\end{eqnarray}
\hspace{6cm} subject to:
\begin{eqnarray}
W_p(\lambda_s,\beta) \leq S_p \label{eqn:cons11}\\
\lambda_s \leq \mu - \lambda_p\\
\lambda_s ,\beta \geq  0 
\end{eqnarray}
Once the optimal secondary class mean arrival rate $\lambda_{s}^{*}$ and optimal queue discipline management parameter $\beta^{*}$ are calculated, the optimal admission price $\theta^{*}$ and optimal assured service level to secondary class $S_{s}^{*}$ can be computed using $S_{s}^{*} = W_{s}(\lambda_{s}^{*} ,\beta^{*})$ and $\theta^* = (a-\lambda_s^*-c S_s^*)/b$.

Note that above optimization problem P1 considers only finite values of $\beta$, though $\beta = \infty$ is also a valid decision variable as it corresponds to a static high priority to secondary class customers. Hence, solution of optimization problem P1 is obtained by decomposing the problem P1 in two parts (with finite and infinite $\beta$) and by comparing objectives (see \cite{sinha2010pricing}).

Optimization problem P1 can be transformed in following equivalent problem T1 by using the completeness and equivalence results between DDP and relative priority (see Theorem \ref{clm:RPequiv}). Problem T1 is comparatively easy to solve as optimization is over a compact set $p_1 \in[0,1]$ instead of $\beta \in [0,\infty]$. Thus, decomposition as in \cite{sinha2010pricing} is not needed while solving problem T1. 
\begin{eqnarray}
\mbox{\textbf{T1:}\space} \max_{\lambda_s,p}~ \dfrac{1}{b}\left( a\lambda_s -\lambda_s^2 -c \lambda_s W_s(\lambda_s,p)\right)  
\end{eqnarray}
\hspace{6cm} subject to:
\begin{eqnarray}
W_p(\lambda_s,p) \leq S_p \label{eqn:cons11}\\
\lambda_s \leq \mu - \lambda_p\\
\lambda_s \geq 0, 0 \leq  p \leq 1 
\end{eqnarray}
Note that the optimal solution to problem P1 or T1 is optimal over set of all non pre-emptive, {non-anticipative} and work conserving scheduling discipline from the virtue of completeness discussed in Section \ref{sec:completeness_proofs}.

\subsection{Optimal utility in data network}\label{utility_example}

{

The fundamental goal of any network design is to meet the needs of the users. An appropriate utility function often describes how the performance of an application depends on the delivered service. One can always increase the efficacy of an architecture by
deploying more bandwidth; faster speeds mean lower delays and fewer packet losses, and therefore higher utility values. Alternatively, for a given bandwidth (service rate), utility can be maximized by using optimal scheduling scheme.

 The utility maximization framework considered in \cite{jiang2002delay} is fairly generic. Jiang et. al. \cite{jiang2002delay} aim to maximize the utility in a delay sensitive data network by optimizing over probabilistic priority scheduling scheme. We now observe that their optimization problem is over a wider class (all $\pi \in \mathcal{F}$) by the completness of probabilistic priority in Theorem \ref{PP_completeness}. However, the mean waiting time expressions used in \cite{jiang2002delay} are approximate. Thus, the probabilistic priority parameter obtained by \cite{jiang2002delay} results in sub-optimal utility; we circumvent this problem by using relative priority scheme which is not only complete (see Theorem \ref{clm:RPcomplete}) but for which the closed form expressions for mean waiting times are also known. 


We first explain the utility framework of data network  considered in \cite{jiang2002delay}. Further, we obtain {optimal} utility by exploiting completeness of relative dynamic priority discussed in Section \ref{sec:completeness_proofs}. {Exact expressions enable us to study the impact of approximation on mean waiting time and optimal utility; we illustrate this by appropriate computational examples. }
}

Consider a network with single switch and two classes of customers. Delay experienced by a packet in the network can be approximated by sojourn time. The switch (e.g. Asynchronous transfer mode (ATM) switch) is modelled as a deterministic server with service time for a packet from either class as 1 unit time. Arrivals are according to independent Poisson processes with rate $\lambda_1$ and $\lambda_2$ for class 1 and class 2 respectively. Services provided by this network are differentiated into two classes: real-time service for real-time applications, and best-effort service for non-real-time applications. Without loss of generality, it is assumed that class 1 is for real-time service and class 2 is for best-effort service.

Real-time applications are usually delay sensitive. Such applications need their
data (packets) to arrive within a required delay. They perform badly if packets arrive later than this {required delay} bound. A pair $(d,b)$ is used to model this quality of service requirement. Here $d$ is the delay bound and $b$ is the acceptable probability that packets from real time class violate the delay bound.  Let $T_i^{\pi}$ be the sojourn time  experienced by class $i$, $i \in \{1,2\},$ under scheduling policy $\pi \in \mathcal{F}$ where $\mathcal{F}$ is set of all non pre-emptive, {non-anticipative} and work conserving scheduling disciplines. Let $v_1$ be the utility produced by the real time class if its quality of service requirement $(d,b)$ is met and $-v_2$ is the cost if the requirement is violated. Thus, utility function for real time class under scheduling policy $\pi$ is given by:
 \[
 u_1 =
  \begin{cases}
   v_1 & \text{if } P(T_1^\pi>d) \le b \\
   -v_2      & \text{if } P(T_1^\pi>d) > b
  \end{cases}
\]

On the other hand, non real-time applications do not have a delay bound requirement. Nevertheless, such applications usually prefer their data (packets) to be transmitted as quickly as possible. Let $v_3$ be the utility derived when packets from the best-effort class are transmitted infinitely fast and $v_4$ is the rate at which the utility declines as a function of the average sojourn time. Let $\bar{T}_i^\pi$ be the average sojourn time experienced by class $i$ packets under scheduling policy $\pi \in \mathcal{F}$. Thus, utility function for best effort class under scheduling policy $\pi$ is given by: 
$$u_2 = v_3 - v_4 \bar{T}_2^\pi$$
\[
\text{ Total utility } U:= u_1 +u_2 =
  \begin{cases}
   v_1 + v_3 - v_4 \bar{T}_2^\pi& \text{if } P(T_1^\pi>d) \le b \\
    v_3-v_2  - v_4 \bar{T}_2^\pi    & \text{if } P(T_1^\pi>d) > b
  \end{cases}
\]
Since the service time is deterministic 1 unit, thus 
$$T_i^\pi = W_i^\pi + 1,~~\bar{T}_i^\pi = \bar{W}_i^\pi + 1~\text{ for }i = 1, 2$$
Total utility under scheduling policy $\pi$ can be rewritten as:
\[
 U =
  \begin{cases}
   v_1 + v_3 - v_4 (1+\bar{W}_2^\pi) & \text{if } P(W_1^\pi>d-1) \le b \\
    v_3-v_2  - v_4 (1+\bar{W}_2^\pi)    & \text{if } P(W_1^\pi>d-1) > b
  \end{cases}
\]
By using tail probability approximation $P(W_i^\pi > x) \approx \rho e^{-\rho x / \bar{W}_i^\pi}$ from \cite{Wtime_approx}, total utility function, $U$, further simplifies to:
  \begin{numcases}{  U =}  
    v_1 + v_3 - v_4 (1+\bar{W}_2^\pi)   & $\bar{W}_1^\pi \le K$ \label{gfcfs1}\\
  v_3-v_2  - v_4 (1+\bar{W}_2^\pi)    & $\bar{W}_1^\pi > K$ \label{gfcfs2}
  \end{numcases}   
where $K :=  \dfrac{\rho(d-1)}{ln(\rho/b)}$. 

Then, one is interested in maximizing total utility function over all scheduling policy $\pi \in \mathcal{F}$ for given input parameters $v_1,\cdots,v_4$, arrival rates $\lambda_1,~\lambda_2$ and $(b,d)$ pair. It is easy to see that above utility is maximized by the scheduling policy for which $\bar{W}_1^\pi = K$. Note that $(\bar{W}_1^\pi, \bar{W}_2^\pi)$ satisfies following constraints for $\pi \in \mathcal{F}$. 
\begin{eqnarray}\label{Claw}
\rho_1 \bar{W}_1^\pi + \rho_2 \bar{W}_2^\pi = \frac{\rho}{1-\rho}W_0\\\label{W1bound}
\frac{W_0}{1-\rho_1} \le \bar{W}_1^\pi \le \frac{W_0}{(1-\rho)(1-\rho_2)}\\\label{W2bound}
\frac{W_0}{1-\rho_2} \le \bar{W}_2^\pi \le \frac{W_0}{(1-\rho)(1-\rho_1)}
\end{eqnarray}
Equation (\ref{Claw}) represents conservation law (see \cite{Kleinrock1965}). Equation (\ref{W1bound}) and (\ref{W2bound}) are the bounds on mean waiting times obtained by assigning strict priorities. For some values of $(d,b)$, $K$ can be beyond the above range of $\bar{W}_1^\pi$. In such cases, optimal utility for real time applications will be $v_1$ or $-v_2$ irrespective of scheduling policies. For either cases, system utility is maximized when $\bar{W}_2^\pi$ reaches its lower bound, i.e., strict priority is given to class 2. Hence, optimal scheduler produces the following system utility:
  \begin{numcases}{ U(OPT) =}
   v_1 + v_3  - v_4\left(1+\dfrac{W_0}{1-\rho_2}\right)   & $\hspace*{1.7cm} K >  \dfrac{W_0}{(1-\rho)(1-\rho_2)}$ \label{strictp1}\\
    v_3 + v_1  - v_4 \left[1 + \left(\dfrac{\rho \bar{W}_0}{1-\rho}- \rho_1K\right)/\rho_2   \right]    & $ \dfrac{W_0}{1-\rho_1} \le K \le \dfrac{W_0}{(1-\rho)(1-\rho_2)}$\label{puredynamic} \\
    v_3 - v_2 - v_4 \left(1+\dfrac{W_0}{1-\rho_2} \right)  & $\hspace*{1.7cm} K < \dfrac{W_0}{1-\rho_1}$\label{strictp2}
  \end{numcases}
Pure dynamic scheduling policy such that $\bar{W}_1^\pi =K$ is optimal for Equation (\ref{puredynamic}). It follows from completeness of dynamic priority schemes discussed in Section \ref{sec:completeness_proofs} that there exists a dynamic priority parameter $\pi$ for which $\bar{W}_1^\pi =K$ in a complete scheduling class. We consider relative priority and probabilistic priority schemes which are shown to be complete in Section \ref{sec:completeness_proofs}. Mean waiting times under relative priority scheme are known. Hence, we obtain the exact relative priority parameter to achieve optimal utility by the virtue of completeness in following theorem via $\bar{W_1}^\pi \equiv\bar{W_1}^{p^{RP}} =K$.

\begin{thm}\label{RP_theorem}
The maximum total utility over the  set of all non pre-emptive, {non-anticipative} and work conserving scheduling policies is achieved by implementing relative priority with the following parameter:
  \begin{numcases}{ p^{RP} =}
   \hspace*{4cm}0   & $\text{if } K >  \dfrac{W_0}{(1-\rho)(1-\rho_2)}$ \label{static1}\\
    \dfrac{\text{ln}(\frac{\rho}{b})W_0 - \rho(d-1)(1-\rho_2)(1-\rho)}{\rho\text{ln}(\frac{\rho}{b})W_0 + \rho(d-1)(\rho_2(1-\rho_2) - \rho_1(1-\rho_1))}    & $\text{if } \dfrac{W_0}{1-\rho_1} \le K \le \dfrac{W_0}{(1-\rho)(1-\rho_2)}$ \label{dynamic}\\
   \hspace*{4cm} 0  & $\text{if } K < \dfrac{W_0}{1-\rho_1}$\label{static2}
  \end{numcases}
\end{thm}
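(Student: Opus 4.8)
The plan is to reduce this utility-maximization theorem, which the preceding discussion has already localized, to a single algebraic equation in the relative-priority parameter. The argument leading to (\ref{puredynamic}) establishes that whenever $K$ lies in the interval $\left[\frac{W_0}{1-\rho_1},\frac{W_0}{(1-\rho)(1-\rho_2)}\right]$ the utility is maximized by a policy with $\bar{W}_1^\pi = K$: by the conservation law (\ref{Claw}) one has $\rho_2\bar{W}_2^\pi = \frac{\rho}{1-\rho}W_0 - \rho_1\bar{W}_1^\pi$, so minimizing $\bar{W}_2^\pi$ in the ``good'' branch $\bar{W}_1^\pi\le K$ is the same as pushing $\bar{W}_1^\pi$ up to its largest admissible value $K$. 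By completeness of relative priority (Theorem \ref{clm:RPcomplete}) such a target is attained by some parameter $p=p_1\in[0,1]$, so the only remaining task for the middle case is to exhibit that $p$ explicitly from the closed form (\ref{eqn:2class_relative}).

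First I would substitute $p_2=1-p_1$ into the class-$1$ expression of (\ref{eqn:2class_relative}), writing $p$ for $p_1$, to get $\bar{W}_1^{RP}(p)=\frac{(1-\rho p)W_0}{D(p)}$ with $D(p)=\bigl(1-\rho_1-(1-p)\rho_2\bigr)\bigl(1-\rho_2-p\rho_1\bigr)-p(1-p)\rho_1\rho_2$. The crucial simplification, which I expect to be the main lever of the whole proof, is that although $D(p)$ looks quadratic in $p$, its $p^2$ terms cancel after expansion, leaving $D(p)=(1-\rho)\bigl[(1-\rho_2)+p(\rho_2-\rho_1)\bigr]$, which is \emph{linear} in $p$. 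Consequently $\bar{W}_1^{RP}(p)=K$ becomes the linear equation $(1-\rho p)W_0 = K(1-\rho)\bigl[(1-\rho_2)+p(\rho_2-\rho_1)\bigr]$, with unique solution $p=\frac{W_0-K(1-\rho)(1-\rho_2)}{\rho W_0 + K(1-\rho)(\rho_2-\rho_1)}$.

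Substituting $K=\frac{\rho(d-1)}{\ln(\rho/b)}$ and clearing the factor $\ln(\rho/b)$ reproduces the numerator $\ln(\rho/b)W_0-\rho(d-1)(1-\rho_2)(1-\rho)$ of (\ref{dynamic}) at once; to match the stated denominator I would invoke the identity $(1-\rho)(\rho_2-\rho_1)=\rho_2(1-\rho_2)-\rho_1(1-\rho_1)$, which follows from $\rho_1+\rho_2=\rho$. A sanity check then confirms feasibility: $p=1$ gives $\bar{W}_1^{RP}=\frac{W_0}{1-\rho_1}$ and $p=0$ gives $\bar{W}_1^{RP}=\frac{W_0}{(1-\rho)(1-\rho_2)}$, so as $K$ sweeps the admissible interval, $p$ decreases monotonically through $[0,1]$, and in particular stays in the feasible range.

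Finally, for the two extreme regimes I would appeal directly to the optimal-utility expressions (\ref{strictp1}) and (\ref{strictp2}): when $K>\frac{W_0}{(1-\rho)(1-\rho_2)}$ the constraint $\bar{W}_1^\pi\le K$ is vacuous, and when $K<\frac{W_0}{1-\rho_1}$ it is never satisfied, so in both regimes the optimum simply minimizes $\bar{W}_2^\pi$ by giving strict priority to class $2$. From the relative-priority selection mechanism underlying (\ref{eqn:2class_relative}), setting $p_1=0$ makes the class-$1$ service probability $\frac{n_1p_1}{n_1p_1+n_2p_2}$ vanish whenever class $2$ is present, so $p^{RP}=0$ is exactly strict priority to class $2$, matching (\ref{static1}) and (\ref{static2}). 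The only genuinely nontrivial step is the cancellation that renders $D(p)$ linear; everything else is bookkeeping and the already-established optimality of $\bar{W}_1^\pi=K$.
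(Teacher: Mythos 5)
Your proposal is correct and follows essentially the same route as the paper's proof: reduce the optimization over $\mathcal{F}$ to relative priority via Theorem \ref{clm:RPcomplete}, set $p^{RP}=0$ (strict priority to class 2) in the two extreme regimes, and solve $\bar{W}_1^{RP}(p)=K$ using Equation (\ref{eqn:2class_relative}) for the middle case. The only difference is that you make explicit the algebra the paper compresses into ``on simplifying'' --- namely the cancellation of the $p^2$ terms giving the linear denominator $(1-\rho)\bigl[(1-\rho_2)+p(\rho_2-\rho_1)\bigr]$ and the identity $(1-\rho)(\rho_2-\rho_1)=\rho_2(1-\rho_2)-\rho_1(1-\rho_1)$ --- which is a welcome clarification but not a different argument.
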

\begin{proof}
See Appendix \ref{proof:extra}.
\end{proof}

Now, consider probabilistic priority scheme which is also shown to be complete. Optimal probabilistic priority parameter can be obtained by solving $\bar{W_1}^\pi \equiv\bar{W_1}^{p^{PP}} =K$. Exact mean waiting time under probabilistic priority scheduling is not known. However, approximations are known (see \cite{jiang2002delay}). We now obtain the closed form expression for optimal approximate probabilistic priority parameter in following theorem to maximize utility via $\bar{W_1}^\pi \equiv\bar{W_1}^{p^{PP}_{approx}} =K$.


\begin{thm}\label{thm:pputility}
Approximate maximum total utility over the set of all non pre-emptive, {non-anticipative} and work conserving scheduling policies is achieved by implementing the following approximate probabilistic priority parameter:
\begin{numcases}{ p^{PP}_{approx}=}
	\hspace*{2cm}0   & $\text{if } K >  \dfrac{W_0}{(1-\rho)(1-\rho_2)}$ \label{PPstatic1}\\
    \dfrac{S^2-S(1+\rho_2)+\rho_2}{\rho_2-\rho S}    & $\text{if } \dfrac{W_0}{1-\rho_1} \le K \le \dfrac{W_0}{(1-\rho)(1-\rho_2)}$ \label{PPdynamic}\\
   \hspace*{2cm} 0  & $\text{if } K < \dfrac{W_0}{1-\rho_1}$\label{PPstatic2}
\end{numcases}
where $S = \dfrac{\rho(d-1)(1-\lambda_1)-W_0ln(\frac{\rho}{b})}{\rho(d-1) + (1-W_0)ln(\frac{\rho}{b})}$. 
\end{thm}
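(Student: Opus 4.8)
The plan is to reduce the utility-maximization problem to the single scalar equation $\bar{W}_1^{PP}(p) = K$ and then invert the approximate probabilistic-priority waiting-time expression of \cite{jiang2002delay}. The starting point is the same analysis that produced Equations (\ref{strictp1})--(\ref{strictp2}): the total utility $U$ in (\ref{gfcfs1})--(\ref{gfcfs2}) is maximized by steering $\bar{W}_1^\pi$ as close to $K$ as the conservation law (\ref{Claw}) together with the bounds (\ref{W1bound})--(\ref{W2bound}) permits, and by driving $\bar{W}_2^\pi$ to its lower bound $\frac{W_0}{1-\rho_2}$ whenever $K$ falls outside the achievable range of $\bar{W}_1^\pi$.

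First I would dispose of the two extreme regimes. When $K > \frac{W_0}{(1-\rho)(1-\rho_2)}$ or $K < \frac{W_0}{1-\rho_1}$, the target $K$ lies outside the interval (\ref{W1bound}) of achievable $\bar{W}_1^\pi$, so (exactly as in (\ref{strictp1}) and (\ref{strictp2})) optimality forces $\bar{W}_2^\pi$ to its smallest value, i.e. strict priority to class $2$. By the mechanism of the probabilistic-priority discipline described in Section \ref{2classPP}, $\omega_1 = p_1 = 0$ gives class $2$ strict priority over class $1$; hence $p^{PP}_{approx} = 0$ in both cases, which yields (\ref{PPstatic1}) and (\ref{PPstatic2}).

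For the intermediate regime $\frac{W_0}{1-\rho_1} \le K \le \frac{W_0}{(1-\rho)(1-\rho_2)}$, the optimal policy is the pure dynamic one with $\bar{W}_1^\pi = K$ (cf. (\ref{puredynamic})). By completeness of probabilistic priority (Theorem \ref{PP_completeness}) such a policy exists within the PP family, and by the monotonicity of $\bar{W}_1$ in $\omega_1$ (property 1 of Section \ref{2classPP}) the parameter $p = p_1$ achieving it is unique. The remaining task is to solve $\bar{W}_1^{PP}(p) = K$ explicitly. I would substitute the approximate two-class PP waiting-time formula of \cite{jiang2002delay}, specialized to the deterministic server ($\mu = 1$, so $\rho_i = \lambda_i$ with the relevant second moments fixed), and clear denominators. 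To tame the algebra I would introduce the auxiliary quantity $S$; substituting $K = \rho(d-1)/\ln(\rho/b)$ collapses it to $S = \frac{K(1-\lambda_1) - W_0}{K + 1 - W_0}$, so that $S$ is just the normalized target waiting time and the equation $\bar{W}_1^{PP}(p) = K$ becomes a rational equation with quadratic numerator and linear denominator in $p$. Solving for $p$ returns the stated expression $\frac{S^2 - S(1+\rho_2) + \rho_2}{\rho_2 - \rho S}$.

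The main obstacle will be the inversion step itself. Unlike relative priority, whose closed form (\ref{eqn:2class_relative}) made the analogous Theorem \ref{RP_theorem} a near-direct substitution, the probabilistic-priority waiting time is available only through the approximate, implicitly-structured formula of \cite{jiang2002delay}; I would therefore have to carry out the denominator-clearing and the selection of the correct root of the resulting quadratic with care, and verify that this root lies in $[0,1]$ throughout the intermediate range. The existence and uniqueness of a valid root are guaranteed qualitatively by Theorem \ref{PP_completeness} and the monotonicity properties, but matching this to the explicit formula, together with tracking the sign of $\rho_2 - \rho S$ under the deterministic specialization, is where the computation is most error-prone.
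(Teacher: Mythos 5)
Your proposal is correct and follows essentially the same route as the paper: the two extreme regimes are handled exactly as in Theorem \ref{RP_theorem} (strict priority to class 2, i.e.\ $p^{PP}_{approx}=0$), and the intermediate regime is obtained by inverting the approximate PP waiting-time formula of \cite{jiang2002delay} at $\bar{W}_1^{p}=K$, which is precisely what the paper does with its Equation (\ref{approxPPwaiting}) and the quantities $\beta_1,\ q_2,\ \omega_i$ (your $S$ is exactly $1-\beta_1=\omega_2 q_2$, and your normalization $S=\frac{K(1-\lambda_1)-W_0}{K+1-W_0}$ is the correct specialization). One small correction: after the square root hidden in $q_2$ is eliminated, the resulting equation is \emph{linear} in $p$ (the quadratic is in $S$), so the root-selection issue you flag as the main obstacle does not actually arise.
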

\begin{proof}
See Appendix \ref{proof:extra}.
\end{proof}

%
%
%
%

Note that the optimal scheduling policy is same under both relative and probabilistic priority for certain ranges of input parameters (see Equations (\ref{static1}), (\ref{static2}) and (\ref{PPstatic1}), (\ref{PPstatic2})). In such cases, $K = \dfrac{\rho (d-1)}{ln(\rho/b)}$ which depends on system input parameters, is beyond the range of $\bar{W}_1^\pi, ~\pi \in \mathcal{F}$, as in Equation (\ref{W1bound}). Thus, strict static priority to class 2 is optimal as discussed earlier and hence optimal scheduling policy is same under both relative and probabilistic priority scheduling for these ranges.

Approximate probabilistic priority parameter is obtained in Theorem \ref{thm:pputility} using approximate mean waiting time from \cite{jiang2002delay}. Thus, it is desirable to explore the  error in approximation. We first illustrate that approximate probabilistic priority parameter can be quite misleading; it can assign pure dynamic priority to a class when ({optimal}) relative priority is almost strict. Further, we illustrate the differences between {optimal} utility under relative priority scheduling and approximate utility under probabilistic priority scheduling. We calculate the approximate utility under global FCFS scheduling {scheme} due to its theoretical tractability.   

\subsubsection{Impact of $p^{PP}_{approx}$ on mean waiting times}

Consider the input parameters as $\lambda_1 = 0.25,~\lambda_2 = 0.25,~d = 4.912,~b = 0.01$ {(see experiment 9 in Table \ref{comparison_priority})}. $K$ turns out to be 0.5. At optimality, $\bar{W}_1 = K = 0.5$. Using conservation law, $\bar{W}_2 = 0.5$. Same mean waiting time for both classes with symmetric arrival rates will be achieved by global FCFS scheduling\footnote{ As demonstrated in Section \ref{Gfcfs}.}. Thus, the optimal priority parameter, whether it is relative or probabilistic priority, should be 0.5. Calculation of $p^{RP}$ for relative priority from Equation (\ref{dynamic}) indeed results in $p^{RP} = 0.5$. This verifies the exactness of mean waiting time expression of relative priority scheduling. While probabilistic priority from Equation (\ref{PPdynamic}) results in $p^{PP}_{approx} = 0.675$. This error is due to approximation in mean waiting time expression of probabilistic priority. 
\begin{table}[htb]
\centering
\begin{tabular}{|c|c|c|c|c|c|c|c|p{2.2cm}|p{2.62cm}|}
\hline 

 &\multicolumn{7}{c|}{ }   & \multicolumn{2}{|c|}{Scheduling schemes}\\
\hline
No. &$\lambda_1$ & $\lambda_2$ & $d$ & $b$ &$K$ &  $\bar{W}_1$ & $\bar{W}_2$ & Optimal relative priority ($p^{RP}$) & Approximate probabilistic priority ($p^{PP}_{approx}$)\\ 
\hline 
1 & 0.1182 & 0.26 & 4.912 & 0.01 & 0.4073 & 0.4073 & 0.2572 & 0.0159 & 0.5001 \\ 		
\hline 
2 & 0.37 & 0.1 & 4.912 & 0.01 & 0.4776 & 0.4776 & 0.3170 & 0.1712 & 0.5927 \\ 
\hline 
3 & 0.37 & 0.62 & 4.912 & 0.3 & 3.2438 & 3.2438 & 77.1045 & 0.9689 & 0.5754 \\ 
\hline 
4 & 0.47 & 0.15 & 4.912 & 0.01 & 0.5877 & 0.5877 & 1.5305 & 0.9954 & 0.9959 \\ 
\hline 
5 & 0.25 & 0.15 & 2.912 & 0.05 & 0.3678 & 0.3678 & 0.2759 & 0.2145 & 0.6413 \\ 
\hline 
6 & 0.23 & 0.15 & 2.912 & 0.05 & 0.3582 & 0.3582& 0.2270 & 0.0222 & 0.5807 \\ 
\hline 
7 & 0.3 & 0.2 & 3.3 & 0.0706 & 0.5875 & 0.5875 & 0.3668 & 0.1570 & 0.5001 \\ 
\hline 
8 & 0.4471 & 0.1 & 4.5 & 0.03 & 0.6595 & 0.6595 & 0.3558 & 0.1028 & 0.5000 \\
\hline  
9 & 0.25 & 0.25 & 4.912 & 0.01 & 0.5 & 0.5 & 0.5 & 0.5 & 0.675 \\
\hline  
\end{tabular} 
\caption{Optimal relative and approximate probabilistic priority parameters for various input instances}\label{comparison_priority}
\end{table}

Experiment 1, 7 and 8 in Table \ref{comparison_priority} show that optimal relative priority is close to 0 (static priority) while approximate probabilistic priority is close to 0.5 (global FCFS). Experiment 2 and 5 show the instances where relative priority results in higher priority to class 2 while probabilistic priority results in higher priority to class 1. Of course, there are some instances where approximation can be close to {optimal} results (see Experiment 4). In general, approximate parameters can be misleading (see Table \ref{comparison_priority}). 

\subsubsection{Impact of $p^{PP}_{approx}$ on optimal utility} 
Recall the problem of calculating the optimal scheduling parameter that maximizes the system utility as discussed in Section \ref{utility_example}. Given the system parameters $\lambda_1, \lambda_2, d, b, v_1, v_2, v_3, v_4$, one can find the optimal relative priority that achieves the maximum system utility as relative priority scheduling scheme is complete. Optimal relative priority parameter is given by Theorem \ref{RP_theorem} and 
{optimal} utility can be calculated using Equation (\ref{puredynamic}) as shown in Table \ref{compare_PP}.

Probabilistic priority scheme is also shown to be complete {in Section \ref{sec:completeness_proofs}} and hence one would be interested in calculating the optimal probabilistic priority parameter that maximizes the system utility. However, to do so, one needs to know the mean waiting times of both classes when a given probabilistic priority parameter is used, but, the only approximate mean waiting times are known (see \cite{jiang2002delay}). But, for global FCFS scheduling scheme  the mean waiting times of both classes are same as $\frac{W_0}{(1-\rho)}$ (see Section \ref{Gfcfs}). The system parameters in Table \ref{compare_PP} are chosen such that Theorem \ref{thm:pputility} results in $p^{PP}_{apprx}$ as $0.5$, which corresponds to global FCFS scheduling scheme. Put other way, for the system parameters as in Table 2, the available approximate mean waiting times for probabilistic priority scheme means that global FCFS should yield `maximal' utility. Using Equation (\ref{gfcfs1}) and (\ref{gfcfs2}), we calculate the utility obtained when global FCFS is used and we list them in the last column as `approximate utility'. { These approximations are in the computation of $p^{PP}_{approx}$ in Theorem \ref{thm:pputility}.} Note that, in these calculations, $K$ is dependent on system parameters and $\bar{W}_1^\pi = \bar{W}_1^{GFCFS} = \frac{W_0}{(1-\rho)}$. {Optimal} and approximate utilities are calculated for different instances of input parameters in Table \ref{compare_PP}. {It can be seen that approximate utility can be quite different from optimal utility (see Table \ref{compare_PP}); and can be misleading in some instances.}

\begin{table}[h]\centering
\begin{tabular}{|c|c|c|c|c|c|p{2.4cm}|c|p{2.8cm}|}
\hline 
$\lambda_1$ & $\lambda_2$ & $d$ & $b$ &  $v_3$  &$p^{RP}$ & {Optimal} Utility (using relative priority) & $p^{PP}_{approx}$ & Approx. Utility   (using probabilistic priority) \\ 
\hline 
 0.1179 & 0.26 & 4.911 & 0.01  &  300 & 0.0151 & 	209.16 & 0.5 & 203.55 \\ 
\hline 
0.301 & 0.1991 & 3.3 & 0.0706 & 300 &  0.1559 & 195.81 & 0.5 & 179.97 \\ 
\hline 
0.4471 & 0.1 & 4.5 & 0.03 &  300 &    0.1028 & 197.30 &  0.5 &  167.52 \\ 
\hline 
{0.16} & {0.382} & 6 & 0.01 & 500 &    0.3654 &  373.37  & 0.5 & 
{368.99} \\ 
\hline 
0.27 & 0.5284 & 4.9 & 0.1 &  600 &   0.6469 & 272.86   & 0.5 & 182.38 \\ 
\hline 
\end{tabular} 
\caption{{Optimal} vs approximate utility for different instances with $v_1 = v_2 = 60$ and $v_4 = 120$}\label{compare_PP}
\end{table}
\subsection{{Min-max fairness nature of global FCFS policy}}\label{Gfcfs}
{In this section, we introduce the notion of minmax fairness in multi-class queues and argue that a global FCFS policy is minmax fair by using the idea of completeness. Further, we find the explicit expressions of the weights given to extreme points to achieve global FCFS policy in 2-class queue.

We say that the {\em global FCFS} scheduling is employed in a multi-class queue if customers are served according to the order of their arrival times, irrespective of their class. The mean waiting time for each class is equal and given by $\frac{W_0}{(1-\rho)}$ in global FCFS policy. We now introduce the notion of certain minmax fairness and obtain priority parameters that achieve fairness among various classes in this sense.  }

In multi class queues, in addition to the focus on performance metrics such as waiting time, queue length, throughput etc., it is often important to ensure that the customers (jobs) are fairly treated. A vast literature has evolved in the refinement of the notion of fairness (see \cite{levy}, \cite{wierman_pe}, \cite{wierman_phd} and references therein). We introduce   another notion of fairness for multi-class queues: \textit{minimize the maximum dissatisfaction of each customer's class.} Here dissatisfaction is quantified in terms of the mean waiting time. Mathematically, it can be written as:
\begin{equation}
\min_{\pi \in \mathcal{F}}\max_{i \in \mathcal{I}}~(W_{i}^{\pi}) 
\end{equation}
where $\mathcal{I}$ is a finite set of classes and $\mathcal{F}$ is set of all work conserving, non pre-emptive and {non-anticipative} scheduling disciplines. Let $W_{i}^{\pi}$ be the mean waiting time for class $i ,~i \in \mathcal{I},$ customers when scheduling policy $\pi\in \mathcal{F}$ is employed. A minmax problem can also be described as an optimization problem via
lexicographic ordering (see \cite{osborne1994course}, \cite{vanam2013some}). We solve our minmax fairness problem by writing it as \textit{continuous semi-infinite program \footnote{A continuous optimization problem in a finite dimensional space with an uncountable set of constraints} }   (see, \cite{infi} for more details):
$$\hspace{-0.5in} \min_{\pi \in \mathcal{F}} \epsilon $$
\begin{eqnarray}
W_{i}^{\pi} & \leq &\epsilon, ~~\pi \in \mathcal{F}, i \in \mathcal{I} \\ 
\epsilon &\geq  & 0, 
\end{eqnarray}

{Consider a parametrized policy which is complete for $|\mathcal{I}|$ number of classes. Let the vector $\vec{\gamma}^\pi = \{\gamma_1^\pi, \gamma_2^\pi,... \gamma_{|\mathcal{I}|}^\pi\}$ be the parameter vector associated with this parameterized policy which determines a unique mean waiting time vector for $\pi \in \mathcal{F}$. The existence of such a complete parametrized policy is guaranteed from the synthesis algorithm of \cite{federgruen}, where a generalized delay dependent priority is used as a parametrized scheduling scheme. Thus, the above optimization problem can equivalently be solved by optimizing over the range of $\vec{\gamma}$}:
$$\hspace{-0.5in}\min_{\vec{\gamma}^\pi} \epsilon $$
\begin{eqnarray}\label{max_constraint}
W_{i}^{\vec{\gamma}^\pi} &\leq & \epsilon, ~~ i \in \mathcal{I}\\ 
\epsilon &\geq & 0,\\\label{conservation_law}
\sum_{i \in \mathcal{I}}\rho_i  W_{i}^{\vec{\gamma}^\pi} &=& \frac{\rho W_0}{(1-\rho)}
\end{eqnarray}
{Constraint (\ref{conservation_law}) is necessary as parametrized policy should satisfy the conservation law. Let {$W_{i}^{g},~i \in \mathcal{I},$} be the optimal solution of above optimization problem. We first argue that $W_{i}^{g},~i \in \mathcal{I},$ has to be equal for each class $i$ at optimality. It is clear from the conservation law (Equation (\ref{conservation_law})) that any deviation from equal mean waiting time policy will result in higher mean waiting time  (more than $W_{i}^{g}$) for at least one of the classes. Hence $\epsilon$ corresponding to that policy will be always more than $\epsilon$ corresponding to equal mean waiting time policy (due to Constraint (\ref{max_constraint})). Thus, the minima of the semi-infinite program will be given by $W_{i}^{g}$. Further, it follows from conservation law that $W_{i}^{g} = \frac{W_0}{(1-\rho)}$ for $i\in\mathcal{I}$. It will be attained by a suitable parameter as the class is complete. These parameters must implement global FCFS policy as each policy in the complete class corresponds to a mean waiting time vector (equal mean waiting times in this case). Thus, global FCFS policy is min-max fair. Note that global FCFS policy is realized by different parametrized dynamic priority policies discussed in this paper. Global FCFS is achieved by extended DDP, EDD, relative and HOL-PJ based priority by keeping all $b_i$'s, $u_i$'s, $p_i$'s and $D_i$'s equal respectively (see Equations (\ref{eqn:DDP_recursion}), (\ref{eqn:EDD_recursion}), (\ref{eqn:RP_recursion}) and (\ref{eqn:holpj_recursion}) respectively).

We now discuss a particular case of 2-class queues. In case of 2-class parametrized queueing system, global FCFS policy is realized by extended delay dependent priority with $\beta=1$, by EDD with $\bar{u} = 0$, by relative priority with $p_1 = p_2 = 1/2$ and by HOL-PJ dynamic priority with $\bar{D} = 0$. We find the weights given to the extreme points of line segment of Figure \ref{2classline} to achieve global FCFS in case of two classes. Consider weights $\alpha_1 = \frac{(1-\rho_1)}{(2-\rho_1 -\rho_2)}$ to class 1 and $\alpha_2 = \frac{(1-\rho_2)}{(2-\rho_1 -\rho_2)}$ to class 2. On simplifying, we have }
\begin{equation}
\begin{bmatrix}
\alpha_1 & \alpha_2
\end{bmatrix} \begin{bmatrix}
       W_{1}^{12} & W_{2}^{12}  \\
       W_{1}^{21} &  W_{2}^{21} \\
      
     \end{bmatrix} = \begin{bmatrix}
     \dfrac{W_0}{1-\rho}    & \dfrac{W_0}{1-\rho}   
        \end{bmatrix} 
\end{equation}
 Note that with two classes, we have exactly \textit{one and unique} pair of weights given to extreme points to get the global FCFS point in the interior of the polytope (line segment). Also note that, mean waiting time at this $\alpha_1$ and $\alpha_2$ is $\frac{W_0}{1-\rho}$ which is mean waiting time under global FCFS policy.

\section{Discussion}{{
The notion of completeness of scheduling schemes for mean waiting times vector is discussed for work conserving multi-class queueing systems. Four parametrized dynamic priorities (EDD, HOL-PJ, relative and PP) are shown to be complete for any 2-class M/G/1 queue. Equivalence between EDD, extended DDP, HOL-PJ and relative priority scheme is established. An explicit {nonlinear} one-to-one  transformation between the parameters of extended DDP and EDD policies (or relative priority) are obtained for mean waiting time vectors.

Significance of these results in optimal control of queueing systems is discussed. We formulate some relevant optimal control problems in contemporary areas such as high performance computing, cloud computing and characterize their optimal scheduling scheme. Further, an alternate but simple approach is devised for $c\mu$ rule and a joint pricing and scheduling problem. We obtain the {optimal} utility in 2-class data network by exploiting the completeness of relative priority discipline while approximate utility is obtained in literature by using approximate mean waiting times of probabilistic priority scheme. {A suitable notion of minmax fairness in multi-class queues is introduced and we note that the simple global FCFS scheme turns out to be minmax fair.}
%

It will be interesting to extend these ideas to $N$ class queues. Designing a new \textit{complete} dynamic priority scheme for given application domain can also be explored. The challenge would be to come up with a synthesis algorithm for this complete class, i.e., to devise an algorithm which computes the parameters of this complete class to achieve a given mean waiting time vector. } 
\label{sec:conclusion}
%
%


\appendix
\section{Extended delay dependent priority (DDP)}
\label{proof:DDP_cmplt}
{
Consider the extended delay dependent priority scheme where queue discipline management parameter need not be monotone. Thus, the queue discipline management parameter $b_i \ge 0$ for $i \in \{1,~2\}$. It is clear from Equation (\ref{eqn:DDP_recursion}) that average waiting time expressions in delay dependent priority (DDP) depends on ratios $b_i$ only. Define $\beta := b_2/b_1$. Average waiting time expression for class 1, $W_{1}^{\beta}$, and for class 2, $W_{2}^{\beta}$, in 2-class DDP can be derived using Equation (\ref{eqn:DDP_recursion}) as follows:

\begin{eqnarray}\label{eqn:DDPclass1}
W_{1}^{\beta} &=& \frac{W_0(1 -\rho(1-{\beta}))}{(1-\rho)(1 -\rho_1(1-{\beta}))}\mathbf{1}_{\{\beta \leq 1\}} 
+\frac{W_0}{(1-\rho)(1 -\rho_2(1-\frac{1}{\beta}))}\mathbf{1}_{\{\beta > 1\}}
\end{eqnarray}

\begin{eqnarray}\label{eqn:DDPclass2}
W_{2}^{\beta} &=&\frac{W_0}{(1-\rho)(1 -\rho_1(1-{\beta}))}\mathbf{1}_{\{\beta \leq 1\}}
+\frac{W_0(1 -\rho(1-\frac{1}{\beta})}{(1-\rho)(1 -\rho_2(1-\frac{1}{\beta}))}\mathbf{1}_{\{\beta > 1\}}
\end{eqnarray}
where $\rho_i = \frac{\lambda_i}{\mu_i}$, $\rho = \sum\limits_{i=1}^2\rho_i$, $W_0 = \sum\limits_{i=1}^2\frac{\lambda_i}{2}\left(\sigma_i^2 + \frac{1}{\mu_i^2} \right)$, $0 < \rho <  1$ and $\mathbf{1_{\{.\}}}$ is indicator function. Note that $\beta = 0$ and $\beta = \infty$ gives the corresponding mean waiting time when strict higher priority is given to class 1 and class 2 respectively. Also $\beta = 1$ gives the mean waiting time under global FCFS policy.
}

\section{Proofs of lemma and claims}
\label{proof:lemmaclaim}
{
\textbf{Remark:} Note that corresponding expressions ($\left.\beta = \frac{W_0 - (1-\rho_1)(1-\rho)\tilde{I}(\bar{u})}{W_0 + \rho_1(1-\rho)\tilde{I}(\bar{u})}\right. 
$, $\beta  = \frac{W_0 + \rho_2(1-\rho)I(\bar{u})}{W_0 - (1-\rho_2)(1-\rho)I(\bar{u})}
$, $\beta= \frac{(2 - \rho)(1 - p_1)}{1 - \rho(1 - p_1)} 
$, $\left.\beta = \frac{1 -\rho p_1}{(2 -\rho)p_1}\right.$) in all proofs are monotone in nature. Thus, it doesn't matter whether both end points are included or excluded in analysis. We avoid including both end points in all proofs for clarity.  
}

\begin{mylemma}{\ref{clm:equivalenceDDPnEDD}}
Global FCFS, strict priorities are given by $\bar{u} = 0, -\infty, \infty$ in EDD and $\beta = 1, 0, \infty$ in DDP respectively. This can be verified using the expression of mean waiting times with DDP and EDD dynamic priority (see Equation (\ref{eqn:DDPclass1}), (\ref{eqn:DDPclass2}) and (\ref{eqn:EDDcombined1}), (\ref{eqn:EDDcombined2})). On considering following two cases, we have
\begin{enumerate}
\item $-\infty \le \bar{u} < 0$ and $0 \le \beta < 1:$ On equating the mean waiting time for class 1 under these two dynamic priority using equations (\ref{eqn:DDPclass1}) and (\ref{eqn:EDDcombined1}):
\begin{equation}
 E(W) - \rho_2\int_0^{-\bar{u}}P(T_1(W)> y)dy = \frac{W_0(1 -\rho(1-{\beta}))}{(1-\rho)(1 -\rho_1(1-{\beta}))}
\end{equation}  
On simplifying the above equation for $\beta$, we have 
\begin{eqnarray}\label{eqn:relbetau1}
\beta = \frac{W_0 - (1-\rho_1)(1-\rho)\tilde{I}(\bar{u})}{W_0 + \rho_1(1-\rho)\tilde{I}(\bar{u})}
\end{eqnarray}
where $\tilde{I}(\bar{u}) = \int_0^{-\bar{u}}P(T_1(W)> y)dy$. As $\bar{u} \rightarrow 0, \tilde{I}(\bar{u}) \rightarrow 0$ so $\beta \rightarrow 1$ from above equation. Similarly, as $\bar{u} \rightarrow -\infty, \tilde{I}(\bar{u}) \rightarrow E(T_1(W))$ or $\dfrac{W_0}{(1-\rho)(1-\rho_1)}$. Hence $\beta \rightarrow 0$. So
$$-\infty \le \bar{u} < 0 \Leftrightarrow 0 \le \beta < 1$$ 
Above relation follows from Equation (\ref{eqn:relbetau1}) and by the fact that $\beta$ is monotonically increasing of $\bar{u}$ as $\tilde{I}(\bar{u})$ is monotonically decreasing. 
\item $0 \le \bar{u} \le \infty$ and $1 \le \beta \le \infty:$ Again on equating the mean waiting time for class 1 under these two dynamic priority using equations (\ref{eqn:DDPclass1}) and (\ref{eqn:EDDcombined1}):
\begin{equation}
E(W) + \rho_2\int_0^{\bar{u}}P(T_2(W)> y)dy = \frac{W_0}{(1-\rho)(1 -\rho_2(1-\frac{1}{\beta}))}
\end{equation} 
On simplifying the above equation for $\beta$, we have
\begin{equation}\label{eqn:relbetau2}
\beta  = \frac{W_0 + \rho_2(1-\rho)I(\bar{u})}{W_0 - (1-\rho_2)(1-\rho)I(\bar{u})}
\end{equation}
where $I(\bar{u}) = \int_0^{-\bar{u}}P(T_1(W)> y)dy$. As $\bar{u} \rightarrow 0,~ I(\bar{u}) \rightarrow 0$ so $\beta \rightarrow 1$ from above equation. Similarly, as $\bar{u} \rightarrow \infty, {I}(\bar{u}) \rightarrow E(T_2(W))$ or $\dfrac{W_0}{(1-\rho)(1-\rho_2)}$. Hence $\beta \rightarrow \infty$. So
$$0 \le \bar{u} \le \infty \Leftrightarrow 1 \le \beta \le \infty$$ 
Above relation follows from Equation (\ref{eqn:relbetau2}) and by the fact that $\beta$ is monotonically increasing function of $\bar{u}$ as ${I}(\bar{u})$ is monotonically increasing.
\end{enumerate}
\end{mylemma}
\begin{mylemma}{\ref{clm:RPequiv}}
Average waiting time expression in DDP depends on the value of $\beta$ from Equation (\ref{eqn:DDPclass1}). Hence, consider following two cases:

\begin{enumerate}
\item \textbf{\underline{ $0 \le \beta \leq 1$}} From Equation (\ref{eqn:DDPclass1}), average waiting time for class 1 is given by:
$$W_{1}|_{\beta \leq 1} = \frac{W_0(1 -\rho(1-{\beta}))}{(1-\rho)(1 -\rho_1(1-{\beta}))}$$
On simplifying the expression of mean waiting time under relative priority using Equation (\ref{eqn:2class_relative}): 
\begin{equation}\label{eqn:w1p1}
W_{1}|_{p = p_1 } = \frac{(1-\rho p_1)W_0}{(1-\rho)(1-\rho_2 - p_1(\rho_1-\rho_2))}
\end{equation} 
On simplifying the expressions for $W_{1}|_{p = p_1 } = W_{1}|_{\beta \leq 1} $, we have 
\begin{equation}
p_1 = \frac{1+(1 - \rho)(1-\beta)}{2-\rho(1-\beta)}
\end{equation}
Note that $\beta = 0 \rightarrow p_1 =1$ and $\beta = 1 \rightarrow p_1 =1/2$, on solving the above equation for $\beta$, we get
\begin{equation}
\beta= \dfrac{(2 - \rho)(1 - p_1)}{1 - \rho(1 - p_1)} 
\end{equation}
Note that, $\beta$ is a monotone function of $p_1$ and we are in the case of $0 \leq \beta \le 1$. 
$$0 \leq \beta \leq 1 \Leftrightarrow \frac{1}{2} \leq p_1 \leq 1$$ 

\item \textbf{\underline{$1 < \beta \leq \infty $}} From Equation (\ref{eqn:DDPclass1}), average waiting time for class 1 is given by:
$$W_{1}|_{\beta > 1} =\frac{W_0}{(1-\rho)(1 -\rho_2(1-\frac{1}{\beta}))} $$
 On equating above with Equation (\ref{eqn:w1p1}), we get
 \begin{equation}
 \beta = \dfrac{1 -\rho p_1}{(2 -\rho)p_1}
 \end{equation}
We are in case of $\beta > 1$ this gives $p_1 < 1/2$. Also note that $p_1 = 0  \rightarrow \beta = \infty$ and $\beta$ is a monotone function of $p_1$. Thus, we have $$ 1 < \beta \leq \infty \Leftrightarrow 0 \leq p_1 < \frac{1}{2}$$ 
\end{enumerate}
Hence, it follows that for every $\beta $ there exists $p$ and other way also. Similar arguments can be made if mean waiting time of class 2 is considered. Hence, result follows.
\end{mylemma}

\section{Proof of theorems}
\label{proof:extra}

\begin{mythm}{\ref{clm:EDDcomplete}}
Consider the notation $W_{i}^{12}$ to be the mean waiting time for class $i$, $i=1,2$,  when class 1 has strict priority over class 2. Mean waiting time are \cite{cobham}:
\begin{equation}
W_{1}^{12} = \dfrac{W_0}{1 - \rho_1}~~~~~\text{and}~~~~~~~W_{2}^{12} = \dfrac{1}{(1 - \rho_1)(1 - \rho_1 - \rho_2)}W_0
\label{eqn:class1p}
\end{equation}
Point $W_{12} = (W_{1}^{12}, W_{2}^{12})$  is shown in Figure \ref{2classline}. Similarly, when class 2 has strict priority over class~1, we get
\begin{equation}
W_{1}^{21} = \dfrac{W_0}{(1 - \rho_2)(1 - \rho_1 - \rho_2)}~~~\text{and}~~~W_{2}^{21} = \dfrac{1}{(1 - \rho_2)}W_0
\label{eqn:class2p}
\end{equation}
$W_{21} = (W_{1}^{21}, W_{2}^{21})$ is other extreme point shown in Figure \ref{2classline}. Consider the notation $W_i^{\alpha}$ for class $i$ as $W_i^{\alpha} = \alpha W_{i}^{12} + (1-\alpha)W_{i}^{21}$. On using Equation (\ref{eqn:class1p}) and (\ref{eqn:class2p}), we have
\begin{equation} \label{eqn:convexc}
W_1^{\alpha} = \dfrac{\alpha \rho_2 (\rho_1 + \rho_2 	- 2) + 1-\rho_1 }{(1 - \rho_1)(1 - \rho_2)(1 - \rho_1 - \rho_2)}W_0
\end{equation}
Average waiting time in EDD priority depends on $\bar{u}$ being positive or negative (see Equation (\ref{eqn:EDDcombined1}) and (\ref{eqn:EDDcombined2})), consider the following two cases:
\begin{enumerate}
\item $0 \leq \bar{u} \leq \infty:$ Expected waiting time for class 1 is given by (using Equation (\ref{eqn:EDDcombined1}))
\begin{eqnarray}\nonumber
E(W_1) &=& E(W) + \rho_2 \int_0^{\bar{u}}P(T_2(w)> y)dy\\
 &=& E(W) + \rho_2 I(\bar{u})\nonumber
\end{eqnarray}
On equating $E(W_1)$ with $W_1^{\alpha}$ and solving for $\alpha$, we have
\begin{equation}\label{eqn:alpha1}
\alpha = \dfrac{1-\rho_1}{2-\rho_1 -\rho_2} - \dfrac{I(\bar{u})(1-\rho_1)(1-\rho_2)(1-\rho)}{W_0(2-\rho_1 - \rho_2)}
\end{equation}
$\bar{u} = 0 \Rightarrow  I(\bar{u}) = 0$ so $\alpha = \dfrac{1-\rho_1}{2-\rho_1-\rho_2}$ and $\bar{u} = \infty \Rightarrow I(\bar{u}) = \int_0^{\infty}P(T_2(w)> y)dy = E(T_2(W)) = \dfrac{W_0}{(1-\rho)(1-\rho_2)}$ (See \cite[page 152]{EDDpriority}). On putting back this value of $I(\bar{u})$ in Equation (\ref{eqn:alpha1}), we get $\alpha = 0$. Since $I(\bar{u})$ is monotone increasing in $\bar{u}$, we have 
\begin{equation}
0 \leq \bar{u} \leq \infty \Leftrightarrow 0 \leq \alpha \leq \frac{1-\rho_1}{2-\rho_1 -\rho_2}
\end{equation}
\item $-\infty \leq \bar{u} < 0:$ Expected waiting time for class 1 is given by (using Equation (\ref{eqn:EDDcombined1}))
\begin{eqnarray} \nonumber
E(W_1) &=& E(W) - \rho_2 \int_0^{-\bar{u}}P(T_1(W)> y)dy\\\nonumber
&=& E(W) - \tilde{I}(\bar{u})
\end{eqnarray}
On equating $E(W_1)$ with $W_1^{\alpha}$ and solving for $\alpha$, we have 
\begin{equation}\label{eqn:alpha2}
\alpha = \dfrac{(1-\rho_1)(1-\rho_2)(1-\rho)\tilde{I}(\bar{u})}{(2-\rho_1 -\rho_2)W_0} + \dfrac{(1-\rho_1)}{(2-\rho_1 - \rho_2)}
\end{equation}
$\bar{u} \uparrow 0 \Rightarrow  \tilde{I}(\bar{u}) \downarrow 0$ so $\alpha \downarrow \dfrac{1-\rho_1}{2-\rho_1-\rho_2}$ and $\bar{u} = -\infty \Rightarrow \tilde{I}(\bar{u}) = \int_0^{\infty}P(T_1(w)> y)dy = E(T_1(W)) = \dfrac{W_0}{(1-\rho)(1-\rho_2)}$ (See \cite[page 152]{EDDpriority}). On putting back this value of $I(\bar{u})$ in Equation (\ref{eqn:alpha2}), we get $\alpha = 1$. Since $\tilde{I}(\bar{u})$ is monotonically decreasing in  $\bar{u}$, we have 
\begin{equation}
-\infty \leq \bar{u} < 0 \Leftrightarrow \frac{1-\rho_1}{2-\rho_1 -\rho_2} < \alpha \leq 1
\end{equation} 
\end{enumerate} 
Thus, entire range of $\alpha$ is achieved by unique value of $\bar{u}$. Similar arguments can be made if waiting time of class 2 is considered. Thus, the result follows.
\end{mythm}

\begin{mythm}{\ref{clm:RPcomplete}}
Equation (\ref{eqn:2class_relative}) gives the average waiting time of class 1 with relative priority:
\begin{equation}
W_{1}^{p_{1}} = \dfrac{(1-\rho p_1)}{(1-\rho_1 - p_2 \rho_2)(1-\rho_2 - p_1 \rho_1) - p_1 p_2 \rho_1 \rho_2} W_0
\end{equation}
On equating above equation with convex combination Equation (\ref{eqn:convexc}), we get following relation between $\alpha$ and $p_1$
\begin{equation}
p_1 = \dfrac{\alpha \rho_2 (2-\rho_1 - \rho_2)(1-\rho_2)(1-\rho_1 - \rho_2)}{(\alpha \rho_2(\rho_1+\rho_2-2)+ 1-\rho_1)(\rho_2(1-\rho_2) - \rho_1(1-\rho_1)) +\rho (1-\rho_1)(1-\rho_2)(1-\rho_1 - \rho_2))}
\label{eqn:p1-alpha}
\end{equation} 
$\alpha = 0 \Rightarrow p_1 = 0 $ and $\alpha = 1 \Rightarrow p_1 = 1 $  or $p_2 = 1-p_1 = 0$. $p_1$ is a monotone function of $\alpha$. Thus, for the entire range of $\alpha \in [0,1] ~\exists~ p_1$ in relative priorities defined by Equation (\ref{eqn:p1-alpha}). Similar arguments can be made if waiting time of class 2 is considered. Hence, result follows.

%
%
%
\end{mythm}
\begin{mythm}{\ref{RP_theorem}}
We exploit the completeness of relative dynamic priority from Theorem \ref{clm:RPcomplete} to obtain optimal scheduling parameter. It follows from this theorem that optimizing over set of all non pre-emptive, {\color{blue}non-anticipative} and work conserving scheduling policy is equivalent to optimize over relative priority scheduling. Strict priority to class 2 is optimal for case (\ref{strictp1}) and (\ref{strictp2}). Thus, $p^{RP} = 0$ will be the  optimal relative priority scheduler. Optimal policy for case (\ref{puredynamic}) can be obtained by solving  $\bar{W_1}^\pi \equiv\bar{W_1}^{p^{RP}} =K$. By using mean waiting time expression of relative priority from Equation (\ref{eqn:2class_relative}), we have
\begin{equation}
\dfrac{1-\rho p^{RP}}{(1- \rho_1 - (1-p^{RP}) \rho_2)(1 - \rho_2 - p^{RP}\rho_1)-p^{RP} (1-p^{RP}) \rho_1\rho_2}W_0= K = \frac{\rho(d-1)}{ln(\rho/b)}
\end{equation}   
where $W_0 = \sum\limits_{i=1}^2\lambda_i\bar{x}_i^2/2$. On simplifying for $p^{RP}$, we get
$$p^{RP} = \frac{\text{ln}(\frac{\rho}{b})W_0 - \rho(d-1)(1-\rho_2)(1-\rho)}{\rho\text{ln}(\frac{\rho}{b})W_0 + \rho(d-1)(\rho_2(1-\rho_2) - \rho_1(1-\rho_1))} $$ 
\end{mythm}

\begin{mythm}{\ref{thm:pputility}}
By exploiting the completeness of probabilistic priority from Theorem \ref{PP_completeness} and following the similar arguments as in the proof of Theorem \ref{RP_theorem}, strict priority to class 2 is optimal in Case (\ref{PPstatic1}) and (\ref{PPstatic2}). Thus, $p^{PP}_{approx}=0$ will be optimal probabilistic priority scheduler. Consider the approximate mean waiting time from approach 1 of class $i$, $i=1,2$ (see \cite{jiang2002delay}): 
\begin{equation}\label{approxPPwaiting}
\bar{W}_i = \frac{W_0 + s_{\bar{i}}\frac{(1-\beta_i)}{\beta_i}}{1-\rho_i-\lambda_i s_{\bar{i}}\frac{(1-\beta_i)}{\beta_i}}
\end{equation}
where $s_i$ is the first moment of service time for class $i$. Notation $\bar{i}$ denotes the class other than $i$, i.e., if $i=1,~2$, then $\bar{i}=2,1$ respectively and $\beta_i = (1-q_{\bar{i}}) + \omega_i q_{\bar{i}}$ where $q_i$'s are given by:
$$q_1 = \frac{[1+\omega_1\rho_1 - \omega_2\rho_2] - \sqrt{(1+\omega_1\rho_1-\omega_2\rho_2)^2-4\omega_1\rho_1}}{2\omega_1}$$
$$q_2 = \frac{[1+\omega_2\rho_2 - \omega_1\rho_1] - \sqrt{(1+\omega_2\rho_2-\omega_1\rho_1)^2-4\omega_2\rho_2}}{2\omega_2}$$
Note that $\omega_1 = p_1,$ $\omega_2 = 1-p_1$ and $\omega_i + \omega_{\bar{i}} = 1$ as discussed in Section \ref{2classPP}. Approximate optimal scheduling policy for Case (\ref{PPdynamic}) can be obtained by equating $$\bar{W_1}^\pi \equiv\bar{W_1}^{p^{PP}_{approx}} =K = \frac{\rho(d-1)}{ln(\rho/b)}$$
On using approximate mean waiting time from Equation (\ref{approxPPwaiting}) in above expression, we get approximate probabilistic priority parameter for Case (\ref{PPdynamic}) as given in theorem statement. 
\end{mythm}

\section{Optimal scheduling rule}\label{cmurule}
  $$\mathbf{T1}~~\min_{ p \in [0, 1] }~~c_1 W_{1}^{p}+c_2 W_{2}^{p} $$
Mean waiting time is given by Equation (\ref{eqn:2class_relative}) under relative priority scheduling scheme for two class queues. Define $f(p_1) \equiv c_1W_{1}^{p}+c_2 W_{2}^{p}$. On simplifying the objective using Equation (\ref{eqn:2class_relative}), we have 
\begin{equation}
f(p_1) = \frac{c_1 + c_2(1-\rho)- \rho(c_1 - c_2)p_1}{(1-\rho_1-(1-p_1)\rho_2)(1-\rho_2-p_1\rho_1)-p_1(1-p_1)\rho_1\rho_2}W_0
\end{equation}
On further simplifying, $f(p_1)$ can be re-written as:
$$f(p_1) = \frac{a_1 + a_2 p_1}{a_3 + a_4p_1}$$
where $a_1 = (c_1+c_2(1-\rho))W_0$, $a_2 = \rho(c_2-c_1)W_0$, $a_3 = (1-\rho)(1-\rho_2)$ and $a_4 = \rho_2(1-\rho_2) - \rho_1(1-\rho_1)$. Derivative of $f(p_1)$ with respect to $p_1$ is given by:
\begin{equation}\nonumber
f'(p_1) = \frac{(a_2 a_3 - a_1 a_4)}{(a_3 + a_4p_1)^2}
\end{equation}
Note that sign of derivative depends on sign of $(a_2 a_3 - a_1 a_4)$. Derivative will be positive if $a_2 a_3 > a_1 a_4$.  
$$a_2a_3 > a_1a_4 \Rightarrow \frac{c_2}{\rho_2} > \frac{c_1}{\rho_1} $$ 
Derivative is positive under above condition. Thus, objective function $f(p_1)$ will be increasing in $p_1$. $p_1 = 0$ will achieve the minimum objective value. This completes the proof of optimality of strict priority according to $c/\rho$ rule over all possible scheduling policies with two classes.

\end{document}